\documentclass[journal,10pt,doublecolumn]{IEEEtran}

\usepackage{amsmath}
\usepackage{graphicx}
\usepackage{cite}
\usepackage{amsfonts}
\usepackage{amsthm}
\usepackage{algorithm}
\usepackage{algorithmic}
\usepackage{amssymb}
\usepackage{color}
\usepackage{setspace}

\theoremstyle{definition} \newtheorem{lemma}{Lemma}

\hyphenation{op-tical net-works semi-conduc-tor}

\begin{document}

\title{Securing UAV Communications via Joint Trajectory and Power Control}

\author{Guangchi~Zhang,~\IEEEmembership{Member,~IEEE,}
Qingqing~Wu,~\IEEEmembership{Member,~IEEE,}
Miao~Cui,
Rui~Zhang,~\IEEEmembership{Fellow,~IEEE}
\thanks{G. Zhang and M. Cui are with the School of Information Engineering, Guangdong University of Technology, Guangzhou, China (email: \{gczhang, cuimiao\}@gdut.edu.cn). Q. Wu and R. Zhang are with the Department of Electrical and Computer Engineering, National University of Singapore (email: \{elewuqq, elezhang\}@nus.edu.sg). Q. Wu is the corresponding author. Part of this paper was presented in IEEE Global Communications Conference (GLOBECOM), Singapore, Dec. 2017 \cite{Zhang2017GC}.

This work was supported in part by the National Natural Science Foundation of China under Grant 61571138, in part by the Science and Technology Plan Project of Guangdong Province under Grants 2017B090909006 and 2016B090904001, and in part by the Science and Technology Plan Project of Guangzhou City under Grant 201803030028.}}

\maketitle

\begin{abstract}
Unmanned aerial vehicle (UAV) communication is anticipated to be widely applied in the forthcoming fifth-generation (5G) wireless networks, due to its many advantages such as low cost, high mobility, and on-demand deployment. However, the broadcast and line-of-sight (LoS) nature of air-to-ground wireless channels gives rise to a new challenge on how to realize secure UAV communications with the destined nodes on the ground. This paper aims to tackle this challenge by applying the physical layer security technique. We consider both the downlink and uplink UAV communications with a ground node, namely UAV-to-ground (U2G) and ground-to-UAV (G2U) communications, respectively, subject to a potential eavesdropper on the ground. In contrast to the existing literature on wireless physical layer security only with ground nodes at fixed or quasi-static locations, we exploit the high mobility of the UAV to proactively establish favorable and degraded channels for the legitimate and eavesdropping links, respectively, via its trajectory design. We formulate new problems to maximize the average secrecy rates of the U2G and G2U transmissions, respectively, by jointly optimizing the UAV's trajectory and the transmit power of the legitimate transmitter over a given flight period of the UAV. Although the formulated problems are non-convex, we propose iterative algorithms to solve them efficiently by applying the block coordinate descent and successive convex optimization methods. Specifically, the transmit power and UAV trajectory are each optimized with the other being fixed in an alternating manner, until the algorithms converge. Simulation results show that the proposed algorithms can improve the secrecy rates for both U2G and G2U communications, as compared to other benchmark schemes without power control and/or trajectory optimization.
\end{abstract}

\begin{IEEEkeywords}
5G and UAV communications, physical layer security, secrecy rate maximization, trajectory design, power control.
\end{IEEEkeywords}

\IEEEpeerreviewmaketitle

\section{Introduction}
With many advantages such as high mobility, low cost, wide coverage, and on-demand deployment, unmanned aerial vehicles (UAVs) have been extensively used in both military and civilian applications, such as search and rescue, inspection and surveillance, cargo transportation, etc. Recently, UAVs have also found increasingly more substantial applications in wireless communication \cite{Zeng2016}, and are expected to play a significant role in the forthcoming fifth-generation (5G) wireless networks \cite{wu:magazine,WuSurvey}. To seize this growing opportunity, internationally leading telecommunication companies such as Qualcomm, Ericsson, and China Mobile have already launched their research projects on integrating UAVs into the 5G networks \cite{Qualcomm2016, Ericsson2016}. Generally speaking, there are two main paradigms of UAV applications in 5G. In the first one, termed as ``UAV-assisted wireless communication'', UAVs are utilized as airbone communication platforms such as mobile base stations (BSs) and/or relays that can be flexibly deployed on demand to assist the communications in terrestrial networks such as 5G.  For example, UAV-mounted BSs can be used to enable rapid wireless communication service recovery after ground infrastructure damages, or provide offloading service for terrestrial BSs in extremely crowded areas \cite{Mozaffari2015, Mozaffari2016, Bor2016, Lyu2017, WuGC20170, WuGC2017, WuGC20172,WuGC20173,Sharma2016}. Another example is to use UAVs as mobile relays to provide reliable connectivity between distant users in remote areas (e.g., an uninhabited desert) that are not covered by any existing wireless networks \cite{Zeng2016Trans, Johansen2014}. Moreover, in future internet of things (IoT) applications, UAVs can be dispatched to disseminate/collect data to/from widespread distributed wireless devices efficiently and with low cost \cite{Sotheara2014, Lyu2016, Mozaffari2017}. By contrast, in the other paradigm, known as ``cellular-enabled UAV communication'', UAVs are regarded as new ``sky'' users in the cellular networks that enable two-way communications of the UAVs with ground BSs. For example, the future 5G networks can provide reliable communications for UAVs even beyond the range of their operators' visual line-of-sight (LoS) to achieve long-range UAV control in real time \cite{Qualcomm2016b}. Besides, in UAV-enabled surveillance applications, the captured pictures and/or videos by the UAVs in real time can be uploaded timely to the ground data centers via the 5G networks \cite{Motlagh2017}.

In the aforementioned UAV communication applications in 5G, due to the broadcast nature of wireless channels, their security and privacy are of utmost concern \cite{Gopala2008, Liang2008}. One major advantage of UAV-ground communications is that UAVs usually have LoS channels for the communications with ground nodes, especially in outdoor environments. However, such LoS communication links are also more prone to the eavesdropping by illegitimate nodes on the ground, which gives rise to a new security challenge. Although security was conventionally viewed as a higher layer communication protocol stack design problem that can be tackled by using cryptographic methods, physical layer security has emerged as a promising alternative way of defense to realize secrecy in wireless communication.

A key design metric that has been widely adopted in physical layer security is the so-called secrecy rate \cite{Gopala2008, Liang2008, Wang2011, Xing2016, Khisti2010, Zheng2011, Zou2013, Anli2008, Zhang2016, ZhangTVT2018, Cui2018, Tang2016, Tang2018}, at which confidential message can be reliably transmitted without having the eavesdropper infer any information about the message. A non-zero secrecy rate can be achieved when the strength of the legitimate link is stronger than that of the eavesdropping link. In the existing literature on physical layer security, communication nodes are usually assumed to be at fixed or quasi-static locations. As a result, the average channel quality of the legitimate/eavesdropping link mainly depends on the path loss and shadowing from the transmitter to receiver, which are determined if the locations of the legitimate transmitter/receiver and the eavesdropper are given. Thus, in the case that the average channel gain of the legitimate receiver is smaller than that of the eavesdropper (e.g., due to longer distance from the legitimate transmitter), in order to achieve positive secrecy rates, the  exploitation of the wireless channel small-scale fading in time, frequency, and/or space becomes essential, and various techniques such as power control in time and/or frequency as well as multi-antenna beamforming have been investigated. In \cite{Gopala2008}, power control with rate adaptation over fading channels is proposed to maximize the average secrecy rate. This work is also extended to characterize the secrecy rate region of parallel-fading broadcast channels \cite{Liang2008}. In \cite{Wang2011}, power control over frequency subcarriers is investigated for secrecy rate maximization in an orthogonal frequency-division multiple access (OFDMA) system. In \cite{Xing2016}, joint power control on information signal and artificial noise (AN) is proposed to maximize the secrecy rate of a simultaneous wireless information and power transfer (SWIPT) system. In multiple-input multiple-output (MIMO) systems, transmit beamforming can be jointly employed with AN transmission to effectively enhance the legitimate link capacity and at the same time degrade that of the eavesdropping link. For example, the legitimate transmitter can use beamforming to steer a null to the eavesdropper, or send AN in the direction of the eavesdropper to interferer with it \cite{Khisti2010}. In \cite{Tang2016}, beamforming is jointly designed with channel coding to achieve unconditional security in MIMO communications. If one or more relay helpers are available, they can also cooperatively send AN or jamming signals to interfere with the eavesdroppers to achieve better secrecy communication performance. In \cite{Zheng2011}, optimal cooperative jamming via relays is studied to maximize the secrecy rate of a single-antenna point-to-point legitimate link. Besides, transmission scheduling by exploiting multiuser channel diversity is another effective approach to improve the secrecy communication performance in a system with multiple legitimate users/eavesdroppers. In \cite{Zou2013}, a transmission scheduling scheme is proposed to maximize the secrecy rate of a multiuser cognitive radio network. In \cite{Tang2018}, it is shown that for a terrestrial point-to-point wireless communication system, a moving receiver can achieve better secrecy performance than that of the system with a static receiver.

However, there are still two major challenges that remain unsolved in the existing physical layer security literature. First, the practically achievable secrecy rate can be severely limited if the distance between the legitimate transmitter and its intended receiver is fixed and significantly larger than that between it and a potential eavesdropper, even if the various approaches mentioned above are applied. Second, the channel state information (CSI) of the eavesdropper is usually required at the legitimate transmitter for the implementation of effective power control and/or beamforming techniques. This is practically challenging since the eavesdropper is usually a passive device and thus it is difficult to estimate such CSI. In this paper, we study physical layer security in UAV-ground communications, which may potentially overcome the above two critical issues in conventional studies. First, in contrast to the existing literature with fixed or quasi-static nodes only, the high mobility of UAVs can be exploited to proactively establish stronger links with the legitimate ground nodes and/or degrade the channels of the eavesdroppers, by flying closer/farther to/from them, respectively, via proper trajectory design. This approach is particularly effective in the context of UAV-ground communications (as compared to conventional terrestrial communications), since the LoS links are usually much more dominant over other channel impairments such as shadowing and small-scale fading, due to the much larger height of the UAV than typical ground nodes such as mobile terminals or BSs. Furthermore, since the LoS channel gain is only determined by the link distance, the UAV can practically obtain the channel gain to any potential eavesdropper on the ground if its location is known, which thus resolves the eavesdropper-CSI issue in the existing literature. Note that the location of any ground node as a potential eavesdropper can be practically detected and tracked by the UAV via using an optical camera or synthetic aperture radar (SAR) equipped on the UAV \cite{UAVDisaster, UAVSAR}.

\begin{figure}[!t]
	\centering
	\includegraphics[width=\columnwidth]{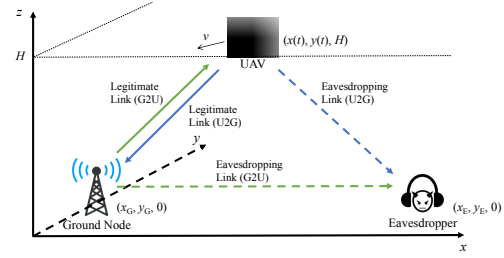}
	\caption{A UAV wireless communication system consisting of a UAV above ground and a node on the ground. A potential eavesdropper on the ground intends to intercept the wireless communication between them.}  \label{FigDLULModel}
\end{figure}

For an initial exposition, in this paper we consider a simplified three-node secrecy UAV-ground communication system as shown in Fig. \ref{FigDLULModel}, where a UAV at fixed altitude intends to communicate with a ground node, while a potential eavesdropper on the ground may intercept their communication. The secure communications of both UAV-to-ground (U2G) and ground-to-UAV (G2U) links are considered. In the U2G case, the UAV and the ground node are the legitimate transmitter and receiver, respectively, where both the legitimate and eavesdropping links are modeled as LoS channels. By contrast, in the G2U case, the ground node and the UAV are the legitimate transmitter and receiver, respectively. Since the legitimate transmitter and potential eavesdropper are both on the ground in this case, different from the U2G case, only the legitimate link is modeled as a LoS channel, while the eavesdropping link is practically modeled as a channel consisting of both distance-dependent path-loss and small-scale Rayleigh fading. Thus, the problem formulations for the secrecy rate maximization in these two cases are generally different, which will be detailed later in this paper. Nevertheless, the secrecy rates of both U2G and G2U transmissions can benefit from the joint design of UAV trajectory and transmit power control at the legitimate transmitter (i.e., UAV and ground node in the U2G and G2U cases, respectively), explained as follows. On one hand, the location of the UAV can be adjusted dynamically to establish stronger channels for the legitimate link than that for the eavesdropping link. On the other hand, due to practical constraints such as the UAV's initial and final locations, the legitimate link may not be always stronger than the eavesdropping link during the whole flight period of the UAV. In this case, transmit power can be adapted to the channel variations arising from the UAV's movement to further improve the secrecy rate. For example, in the U2G case, the UAV should transmit higher power when it flies closer to the ground node while being more far away from the eavesdropper, and transmit lower or zero power otherwise.

Motivated by this, we aim to design joint UAV trajectory and transmit power optimization algorithms to secure both U2G and G2U communications. Our goal is to maximize the average secrecy rate over a finite flight period of the UAV in each case, subject to the practical mobility constraints on the UAV's maximum speed and its initial and final locations, as well as the average and peak transmit power constraints. For the U2G case, the formulated joint trajectory optimization and power control problem for average secrecy rate maximization is difficult to be solved directly due to its non-smooth objective function. To tackle this difficulty, we reformulate the problem into an equivalent problem with a smooth objective function without loss of optimality. Although the non-smoothness issue is resolved, the reformulated problem is still non-convex due to the coupling of the transmit power and UAV trajectory optimization variables. We thus propose an efficient iterative algorithm for solving this problem approximately based on the block coordinate descent method. Specifically, we divide the optimization variables into two blocks, one for transmit power control and the other for UAV trajectory optimization. Then the two blocks of variables are optimized alternately in an iterative manner, i.e., in each iteration one block is optimized with the other block fixed and vice versa. One corresponding sub-problem that optimizes the UAV trajectory under given transmit power is still difficult to solve due to its non-convexity. We thus apply the successive convex optimization method to solve the problem approximately. Finally, we show that our proposed joint optimization algorithm is guaranteed to converge. On the other hand, for the G2U case, similar to the U2G case, we also propose an efficient algorithm to solve the formulated problem by using the block coordinate descent and successive convex optimization methods, while some modification is made in the problem formulation to deal with the non-LoS channel of the eavesdropper link in this case. Simulation results show that the proposed joint trajectory and transmit power designs can improve the average secrecy rates in both U2G and G2U communications, as compared to other benchmark schemes without applying the trajectory optimization and/or transmit power control. Furthermore, it is observed that trajectory optimization and transmit power control are both essential for the U2G case, while for the G2U case, trajectory optimization is less effective as compared to power control.

It is worth noting that UAV systems, there have been prior works (e.g., \cite{He2017, He2018, Singandhupe2018}) that address the security and safety issues of UAVs from other perspectives. In \cite{He2017}, the security vulnerabilities in the global positioning system (GPS) spoofing attack and WiFi attack in UAV applications have been addressed, and effective solutions to these attacks have been suggested. In \cite{He2018}, a monocular camera and inertial measurement unit (IMU) sensor based GPS spoofing detection scheme and an image localization approach for UAV autonomous return have been proposed to support the security and safety of UAVs. In \cite{Singandhupe2018}, a biometric system based on encryption has been proposed to secure the communication link between a UAV and a BS on the ground. Note that these prior works are fundamentally different from this paper which applies the physical layer security technique to deal with the eavesdropping attack in UAV-ground communication systems. It is also noted that there have been prior works (e.g., \cite{WuGC20170,WuGC2017,WuGC20172,WuGC20173,Jiang2012, Zeng2016Trans, Zeng2017}) on trajectory optimization for various UAV communication systems, which consider different system setups and design objectives. In \cite{WuGC20170}, three fundamental tradeoffs in UAV-enabled wireless networks have been identified, i.e., throughput-delay tradeoff, throughput-energy tradeoff, and delay-energy tradeoff.  In \cite{WuGC2017}, a UAV mobile BS serving multiple users is considered, where the UAV trajectory and multiuser scheduling are jointly designed to maximize the minimum throughput of the users. In particular, it is shown in \cite{WuGC20172} that significant communication throughput gains can be achieved by mobile UAVs over static UAVs/fixed terrestrial BSs by exploiting the new design degree of freedom of UAV trajectory optimization, especially for delay-tolerant applications.  To study the fundamental limits of the UAV-enabled wireless network, the capacity region of a two-user broadcast channel is characterized in \cite{WuGC20173} where it has been rigorously  proved that a simple ``fly-hover-fly'' trajectory is capacity achieving. In \cite{Zeng2016Trans}, a UAV-enabled mobile relaying system is investigated, where the UAV trajectory and transmit power are jointly designed to maximize the throughput. In \cite{Jiang2012}, the UAV flying heading is optimized to maximize the achievable sum rate from ground nodes to a UAV by assuming a constant flying speed. In \cite{Zeng2017}, a new design paradigm that jointly considers both the communication throughput and the UAV's flying energy consumption is proposed to maximize the energy efficiency of a point-to-point U2G communication system. Different from these prior works, in this paper, we apply both trajectory optimization and transmit power control to maximize the secrecy rates of both U2G and G2U communications. The main contributions of this paper are highlighted as follows.
\begin{itemize}
	\item Compared to the existing physical layer security literature, this paper is the first to exploit the high mobility of UAVs to improve the secrecy rate via joint trajectory and power control optimization.
	\item Both the U2G and G2U cases in UAV-ground communications are considered. The considered problems for both the two cases are difficult to be solved optimally due to their non-smooth and non-concave objective functions. To tackle this difficulty, we first reformulate the problems into equivalent problems with smooth objective functions, and then propose efficient algorithms to solve the reformulated problems approximately based on the block coordinate descent method and the successive convex optimization method. The obtained results show the fundamental secrecy rate limits of the U2G and G2U communications and demonstrate the importance and necessity of the joint UAV trajectory and transmit power optimization in maximizing the secrecy rate for the new settings. Moreover, the obtained results provide different design guidelines for the U2G case and the G2U case, respectively.
\end{itemize}

The remainder of this paper is organized as follows. Section II presents the system model and problem formulation. Sections III and IV present joint trajectory optimization and transmit power control algorithms for the U2G and G2U cases, respectively. Section V provides simulation results to validate the performance of the proposed algorithms as compared to three benchmark schemes. Finally, Section VI concludes the paper.

\section{System Model and Problem Formulation}
\subsection{System Model}
As shown in Fig. \ref{FigDLULModel}, we consider a UAV-enabled wireless communication system where a UAV above ground and a node on the ground communicate with each other, while a potential eavesdropper on the ground aims to intercept the communications between them. Without loss of generality, we consider a three-dimensional (3D) Cartesian coordinate system with the ground node and the eavesdropper located at $(x_{\text{G}}, y_{\text{G}}, 0)$ and $(x_{\text{E}}, y_{\text{E}}, 0)$ in meters (m), respectively. Their locations are assumed to be fixed and known to the UAV, where the location of the eavesdropper can be detected by using an optical camera or SAR equipped on the UAV. On the other hand, the obtained secrecy rate when the location of the eavesdropper is known serves as an upper bound for that when the location of the eavesdropper is not known.

We consider a given finite flight period of the UAV, with the duration denoted by $T$ in seconds (s). It is assumed that the UAV flies at a fixed altitude of $H$ in m above ground, which can be considered as the minimum altitude required for safety considerations such as terrain or building avoidance. The coordinate of the UAV over time is denoted as $( x(t),y(t),H )$ in m, $0 \leq t \leq T$. For convenience, we divide the period $T$ into $N$ time slots with equal length, i.e., $T=N d_t$, with $d_t$ in s denoting the length of a time slot, which is chosen sufficiently small such that the UAV's location can be regarded as unchanged within each time slot from the viewpoint of the ground node. As a result, the UAV's coordinate in slot $n$ can be denoted as $( x[n],y[n],H )$, and the UAV's horizontal trajectory $(x(t),y(t))$ over the flight period $T$ can be approximated by the sequence $\{ x[n],y[n] \}_{n=1}^{N}$. Denote the maximum speed of the UAV as $v_{\max} $ in m/s. Thus, the maximum flying distance of the UAV in each slot is $D=v_{\max} d_t$. The initial and final locations of the UAV are assumed to be given, which are denoted by $(x_0,y_0,H)$ and $(x_F,y_F,H)$ in m, respectively. For the UAV trajectory to be feasible, we assume that the distance between the initial and final location satisfies that $\sqrt{(x_F - x_0)^2+(y_F - y_0)^2} \leq v_{\max} T$. As a result, the mobility constraints of the UAV can be expressed as
\begin{subequations}  \label{EquMobilityCon}
\begin{align}
( x[1] - x_0)^2 + (y[1] - y_0)^2 & \leq D^2,  \label{EquInitialCon}   \\
( x[n+1]-x[n] )^2 + ( y[n+1]-y[n] )^2 & \leq D^2, \nonumber \\
n=1,\ldots,N-1, \quad \quad \quad \quad \quad \quad \quad \quad \; \; &     \label{EquMidCon}  \\
( x_F - x[N] )^2 + (y_F - y[N])^2 & \leq D^2.   \label{EquFinalCon}
\end{align}
\end{subequations}

We consider both the U2G and G2U communications in the system of interest, which are specified in detail in the following, respectively.

\subsubsection{U2G Transmission}
In the U2G case, the UAV and the ground node play the roles of legitimate transmitter and receiver, respectively. The legitimate link from the UAV to the ground node and the eavesdropping link from the UAV to the eavesdropper are both assumed to be LoS channels, as the recent measurement results in \cite{Lin2017} have shown that the LoS model offers a good approximation for practical UAV-ground communications. Thus, the LoS channel power gain from the UAV to the ground node in time slot $n$ follows the free-space path loss model, given by
\begin{equation}
g_{\text{UG}}[n] = \beta_0 d_{\text{UG}}^{-2}[n] = \frac{ \beta_0 }{ ( x[n] - x_{\text{G}} )^2 + ( y[n] - y_{\text{G}} )^2 + H^2 },
\end{equation}
where $\beta_0$ denotes the channel power gain at the reference distance $d_0=1$m, which depends on the carrier frequency and the antenna gains of the transmitter and receiver, and $d_{\text{UG}}[n] = \sqrt{ ( x[n] - x_{\text{G}} )^2 + ( y[n] - y_{\text{G}} )^2 + H^2 }$ is the distance from the UAV to the ground node in time slot $n$. Similarly, the LoS channel power gain from the UAV to the eavesdropper in time slot $n$ is given by
\begin{equation}
g_{ \text{UE} } [n] = \frac{ \beta_0 }{ ( x[n] - x_{\text{E}} )^2 + ( y[n] - y_{\text{E}} )^2 + H^2 }.
\end{equation}

We denote $p[n]$ as the transmit power of the UAV in time slot $n$. In practice, $ p[n] $'s are usually subject to both average and peak limits over time, denoted by $\bar{P}$ and $P_{\text{peak}}$, respectively. Thus, the transmit power constraints are expressed as
\begin{subequations} \label{EquPowerCon}
\begin{align}
\frac{1}{N} \sum_{n=1}^{N} p[n] & \leq \bar{P},  \label{EquAvgPowCon}  \\
0 \leq p[n] & \leq P_{\text{peak}}, \; \forall n.   \label{EquPeakPowCon}
\end{align}
\end{subequations}
To make the constraint in \eqref{EquAvgPowCon} non-trivial, we assume $\bar{P}<P_{\text{peak}}$ in this paper. In the absence of the eavesdropper, the achievable rate from the UAV to the ground node in bits/second/Hertz (bps/Hz) in time slot $n$ can be expressed as
\begin{align}
R_{\text{UG}}[n] = & \log_2 \left(1 + \frac{ p[n] g_{\text{UG}}[n] }{ \sigma^2 }  \right)     \nonumber \\
= &   \log_2 \left( 1+ \frac{ \gamma_0 p[n] } {  ( x[n] - x_{\text{G}} )^2 + ( y[n] - y_{\text{G}} )^2 + H^2 }  \right),  \label{EquRAB}
\end{align}
where $\sigma^2$ is the additive white Gaussian noise (AWGN) power at the receiver and $\gamma_0 =\beta_0 / \sigma^2$ is the reference signal-to-noise ratio (SNR). Similarly, the achievable rate from the UAV to the eavesdropper in bps/Hz in time slot $n$ is given by
\begin{equation}  \label{EquRAE}
R_{\text{UE}}[n] = \log_2 \left( 1+ \frac{ \gamma_0 p[n] }{ ( x[n] - x_{\text{E}} )^2 + ( y[n] - y_{\text{E}} )^2 + H^2 } \right).
\end{equation}
With \eqref{EquRAB} and \eqref{EquRAE}, the average secrecy rate achievable for the U2G link in bps/Hz over the total $N$ time slots is given by \cite{Gopala2008}
\begin{align}
& R_{\text{sec}}^{(\text{U2G})}  \nonumber \\
= & \frac{1}{N} \sum_{n=1}^N \bigg[ \log_2 \left( 1+ \frac{ \gamma_0 p[n] } {  ( x[n] - x_{\text{G}} )^2 + ( y[n] - y_{\text{G}} )^2 + H^2 }  \right) \nonumber  \\
&  - \log_2 \left( 1+ \frac{ \gamma_0 p[n] }{ ( x[n] - x_{\text{E}} )^2 + ( y[n] - y_{\text{E}} )^2 + H^2 } \right) \bigg]^{+},  \label{EquSecrecyRate}
\end{align}
where $[x]^+ \triangleq \max(x,0)$.

\subsubsection{G2U Transmission}
In the G2U case, the ground node and the UAV play the roles of legitimate transmitter and receiver, respectively. The legitimate channel from the ground node to the UAV is assumed to be LoS, similar as in the U2G case, whose channel power gain in time slot $n$ is given by
\begin{equation}
g_{ \text{GU} } [n] = \frac{ \beta_0 }{ ( x[n] - x_{\text{G}} )^2 + ( y[n] - y_{\text{G}} )^2 + H^2 }.
\end{equation}
Since both the ground node and the eavesdropper are on the ground, the eavesdropping channel between them is assumed to constitute both distance-dependent path loss with pass-loss exponent $\kappa \geq 2$ and small-scale Rayleigh fading. Thus, the channel power gain from the ground node to the eavesdropper at any time is given by
\begin{equation}   \label{EquChGain}
g_{ \text{GE} } = \frac{ \beta_0  }{ d_{\text{GE}}^{\kappa} }  \zeta,
\end{equation}
where $d_{\text{GE}} = \sqrt{ (x_{\text{G}} - x_{\text{E}})^2 + (y_{\text{G}} - y_{\text{E}})^2 }$ denotes the distance between the ground node and the eavesdropper, and $\zeta$ is an exponentially distributed random variable with unit mean accounting for the Rayleigh fading.

We denote $q[n]$ as the transmit power of the ground node in time slot $n$. Similar to the U2G case, $q[n]$'s are constrained by average power limit $\bar{Q}$ and peak power limit $Q_{\text{peak}}$, i.e.,
\begin{subequations} \label{EquPowerConUL}
\begin{align}
\frac{1}{N} \sum_{n=1}^{N} q[n] & \leq \bar{Q},  \label{EquAvgPowConUL}  \\
0 \leq q[n] & \leq Q_{\text{peak}}, \; \forall n,   \label{EquPeakPowConUL}
\end{align}
\end{subequations}
where $\bar{Q} < Q_{\text{peak}}$ is assumed. Similar to \eqref{EquRAB}, the achievable rate from the ground node to the UAV in bps/Hz in time slot $n$ can be expressed as
\begin{equation}  \label{EquRABUL}
R_{\text{GU}}[n] =  \log_2 \left( 1+ \frac{ \gamma_0 q[n] } {  ( x[n] - x_{\text{G}} )^2 + ( y[n] - y_{\text{G}} )^2 + H^2 }  \right).
\end{equation}
The achievable rate from the ground node to the eavesdropper in bps/Hz in time slot $n$ is expressed as
\begin{subequations}  \label{EquRAEUL}
\begin{align}
R_{\text{GE}}[n] = & \; \mathbb{E}_{\zeta} \left[ \log_2 \left( 1 + \frac{\gamma_0 q[n]}{ d_{\text{GE}}^{\kappa} } \zeta  \right) \right]  \label{EquRAEUL1}  \\
\leq & \; \log_2 \left( 1 + \frac{\gamma_0 q[n]}{ d_{\text{GE}}^{\kappa} } \mathbb{E}_{\zeta} \left[ \zeta \right]  \right) \label{EquRAEUL2} \\
=& \;  \log_2 \left( 1 + \frac{\gamma_0 q[n]}{ d_{\text{GE}}^{\kappa} }   \right),  \label{EquRAEUL3}
\end{align}
\end{subequations}
where $\mathbb{E}_{\zeta} [ \cdot ]$ in \eqref{EquRAEUL1} denotes the mathematical expectation with respect to random variable $\zeta$, and the inequality in \eqref{EquRAEUL2} is due to Jensen's inequality and the fact that $\log_2 ( 1 + \gamma_0 q[n] \zeta /d_{\text{GE}}^{\kappa})$ is concave with respect to $\zeta$. \eqref{EquRAEUL3} shows an upper bound of $R_{\text{GE}}[n]$. We consider the worst-case secrecy rate performance by assuming that the eavesdropper is able to achieve this upper bound. With \eqref{EquRABUL} and \eqref{EquRAEUL3}, the following average secrecy rate of the G2U link in bps/Hz over the total $N$ time slots is thus achievable,
\begin{align}
& R_{\text{sec}}^{(\text{G2U})}  \nonumber \\
= & \frac{1}{N} \sum_{n=1}^N  \bigg[ \log_2 \left( 1+ \frac{ \gamma_0 q[n] } {  ( x[n] - x_{\text{G}} )^2 + ( y[n] - y_{\text{G}} )^2 + H^2 }  \right)  \nonumber \\
& -  \log_2 \left( 1 + \frac{\gamma_0 q[n]}{ d_{\text{GE}}^{\kappa} }   \right)   \bigg]^{+}.  \label{EquSecrecyRateUL}
\end{align}

\subsection{Problem Formulation}
For the U2G case, our goal is to maximize the average secrecy rate $R_{\text{sec}}^{(\text{U2G})}$ in \eqref{EquSecrecyRate} by jointly optimizing the UAV's transmit power $\mathbf{p} \triangleq \left[p[1],\ldots,p[N] \right]^{\dagger}$ and the UAV's trajectory in terms of its horizontal coordinates $\mathbf{x} \triangleq \left[x[1], \ldots, x[N]\right]^{\dagger}$ and $\mathbf{y} \triangleq \left[y[1], \ldots, y[N]\right]^{\dagger}$ over all the $N$ time slots, where the superscript $\dagger$ denotes the transpose operation. The optimization variables are subject to the UAV's mobility constraints in \eqref{EquMobilityCon} and the average and peak transmit power constraints in \eqref{EquPowerCon}. We formulate the secrecy rate maximization problem as follows (by dropping the constant term $1/N$ in \eqref{EquSecrecyRate})\footnote{Generally, the UAV's flying altitude can also be optimized by adding the minimum and the maximum altitude constraints. However, it is easy to verify that for our considered problem the optimal objective value can be always achieved at the minimum UAV altitude under the LoS air-to-ground channel model.}
\begin{align}
\text{(P1)}:  &  \nonumber \\
\; \max_{ \mathbf{x}, \mathbf{y}, \mathbf{p} } & \; \sum_{n=1}^N \bigg[ \log_2 \left( 1+ \frac{ \gamma_0 p[n] } {  ( x[n] - x_{\text{G}} )^2 + ( y[n] - y_{\text{G}} )^2 + H^2 }  \right) \nonumber \\
& \; - \log_2 \left(1+ \frac{ \gamma_0 p[n] } { ( x[n] - x_{\text{E}} )^2 + ( y[n] - y_{\text{E}} )^2 + H^2 }  \right) \bigg]^+  \label{EquOriginal} \\
\text{s.t.} &   \;  \eqref{EquMobilityCon}, \; \eqref{EquPowerCon}. \nonumber
\end{align}

Similarly, for the G2U case, we maximize $R_{\text{sec}}^{(\text{G2U})}$ in \eqref{EquSecrecyRateUL} by jointly optimizing the ground node's transmit power $\mathbf{q} \triangleq \left[q[1],\ldots,q[N] \right]^{\dagger}$ and the UAV's horizontal trajectory $\mathbf{x}$ and $\mathbf{y}$. The problem is thus formulated as
\begin{align}
\text{(P2)}:  &  \nonumber  \\
\max_{ \mathbf{x}, \mathbf{y}, \mathbf{q} } & \; \sum_{n=1}^N  \bigg[ \log_2 \left( 1+ \frac{ \gamma_0 q[n] } {  ( x[n] - x_{\text{G}} )^2 + ( y[n] - y_{\text{G}} )^2 + H^2 }  \right) \nonumber  \\
& \; - \log_2 \left( 1 + \frac{\gamma_0 q[n]}{ d_{\text{GE}}^{\kappa} }   \right)   \bigg]^{+}  \label{EquOriginalUL} \\
\text{s.t.} & \;  \eqref{EquMobilityCon}, \; \eqref{EquPowerConUL}. \nonumber
\end{align}
Note that different from problem (P1), only the first logarithmic function in the objective of (P2), i.e., $\log_2 \big( 1+ \frac{ \gamma_0 q[n] } {  ( x[n] - x_{\text{G}} )^2 + ( y[n] - y_{\text{G}} )^2 + H^2 }  \big) $, contains the UAV trajectory variables. This is because the achievable rate from the ground node to the eavesdropper does not depend on the trajectory of the UAV.

Problems (P1) and (P2) are both difficult to be solved optimally due to the following two reasons. First, the operator $[\cdot]^+$ makes the objective functions of (P1) and (P2) non-smooth at zero value. Second, even without $[\cdot]^+$, their objective functions are non-concave with respect to either $\mathbf{x}$, $\mathbf{y}$, or $\mathbf{p}$. In Sections III and IV, we propose efficient algorithms for solving problems (P1) and (P2) approximately, respectively.

\section{Proposed Algorithm for Problem (P1)}
First, we consider problem (P1) for the U2G case. To handle the non-smoothness of the objective function of (P1), the following lemma is used.

\begin{lemma}
Problem (P1) has the same optimal value as that of the following problem,
\begin{align}
\text{(P3)}:  & \nonumber \\
 \max_{ \mathbf{x}, \mathbf{y}, \mathbf{p} } & \; \sum_{n=1}^N \bigg[ \log_2 \left( 1+ \frac{ \gamma_0 p[n] } {  ( x[n] - x_{\text{G}} )^2 + ( y[n] - y_{\text{G}} )^2 + H^2 }  \right) \nonumber \\
& \; - \log_2 \left(1+ \frac{ \gamma_0 p[n] } { ( x[n] - x_{\text{E}} )^2 + ( y[n] - y_{\text{E}} )^2 + H^2 }  \right) \bigg]  \\
\text{s.t.} & \;  \eqref{EquMobilityCon}, \; \eqref{EquPowerCon}. \nonumber
\end{align}
\end{lemma}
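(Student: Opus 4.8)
The plan is to show the two optimal values coincide via a pair of inequalities, where the substantive one rests on the observation that the sign of each per-slot secrecy term is dictated purely by the UAV--ground geometry and not by the transmit power. For any feasible tuple $(\mathbf{x},\mathbf{y},\mathbf{p})$ write
$$
f_n = \log_2\!\Big(1+\tfrac{\gamma_0 p[n]}{(x[n]-x_{\text{G}})^2+(y[n]-y_{\text{G}})^2+H^2}\Big) - \log_2\!\Big(1+\tfrac{\gamma_0 p[n]}{(x[n]-x_{\text{E}})^2+(y[n]-y_{\text{E}})^2+H^2}\Big),
$$
so that (P1) maximizes $\sum_n [f_n]^+$ and (P3) maximizes $\sum_n f_n$ over the \emph{same} feasible set defined by \eqref{EquMobilityCon} and \eqref{EquPowerCon}. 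The easy inequality I would record first: since $[x]^+ \ge x$ for every $x$, at each feasible point the objective of (P1) dominates that of (P3), and maximizing both sides shows the optimal value of (P1) is at least that of (P3).

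The main work is the reverse inequality, for which I would start from an optimal solution $(\mathbf{x}^\star,\mathbf{y}^\star,\mathbf{p}^\star)$ of (P1) and construct a feasible point of (P3) attaining the same value. The key structural fact is that for any $p[n]\ge 0$ the sign of $f_n$ is governed solely by the distances: when $p[n]>0$ one has $f_n>0$ iff $(x[n]-x_{\text{G}})^2+(y[n]-y_{\text{G}})^2 < (x[n]-x_{\text{E}})^2+(y[n]-y_{\text{E}})^2$, i.e. the ground node is closer than the eavesdropper, while $f_n=0$ whenever $p[n]=0$. Writing $f_n^\star$ for $f_n$ evaluated at the (P1) optimum, I would keep the trajectory unchanged and set a new power profile $p'[n]=p^\star[n]$ on the slots where $f_n^\star\ge 0$ and $p'[n]=0$ on the slots where $f_n^\star<0$.

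Zeroing power on a subset of slots leaves the trajectory, hence the mobility constraints \eqref{EquMobilityCon}, untouched, and it can only relax the peak and average power constraints \eqref{EquPowerCon}, so $(\mathbf{x}^\star,\mathbf{y}^\star,\mathbf{p}')$ is feasible for (P3). On the retained slots $f_n(\mathbf{x}^\star,\mathbf{y}^\star,p')=f_n^\star=[f_n^\star]^+$, and on the zeroed slots $f_n(\mathbf{x}^\star,\mathbf{y}^\star,p')=0=[f_n^\star]^+$; summing, the (P3) objective at the constructed point equals $\sum_n[f_n^\star]^+$, i.e. the optimal value of (P1). Hence the optimal value of (P3) is at least that of (P1), and combined with the easy inequality the two values are equal.

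The point requiring care — and the part I would view as the crux — is verifying that the sign of $f_n$ is power-independent, so that ``killing'' the power on a detrimental slot is exactly what converts a negative contribution into the zero that $[\cdot]^+$ would have assigned it, together with checking that this surgery preserves feasibility. Both reduce to the elementary monotonicity $1/A \gtrless 1/B \iff A \lessgtr B$ and to the fact that lowering powers cannot violate \eqref{EquPowerCon}. No convexity or smoothness is invoked here; the sole purpose of the lemma is to justify discarding $[\cdot]^+$ so that the smooth, non-convex problem (P3) can be handed to the block coordinate descent and successive convex optimization machinery developed later.
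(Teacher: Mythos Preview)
Your proof is correct and mirrors the paper's argument exactly: the easy direction via $[x]^+\ge x$, and the reverse direction by starting from an optimal $(\mathbf{x}^\star,\mathbf{y}^\star,\mathbf{p}^\star)$ for (P1), zeroing the power on slots with $f_n^\star<0$, checking feasibility, and observing the (P3) objective then equals $\sum_n[f_n^\star]^+$. One minor remark: your emphasis on the sign of $f_n$ being power-independent is extraneous (and the paper does not invoke it) --- the only fact needed is that $p[n]=0$ forces $f_n=0$, which is immediate.
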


\begin{proof}
Denote $L_1$ and $L_3$ as the optimal values of (P1) and (P3), respectively. First, we have $L_1 \geq L_3$, since the objective function of (P1) is no smaller than that of (P3), and (P1) and (P3) have the same constraints.

Next, we show $L_3 \geq L_1$ also holds. Denote $(\mathbf{x}^*, \mathbf{y}^*, \mathbf{p}^*)$ as the optimal solution to (P1), where $\mathbf{x}^* = [x^*[1],\ldots,x^*[N]]^\dagger$, $\mathbf{y}^*=[y^*[1],\ldots,y^*[N]]^\dagger$, and $\mathbf{p}^*=[p^*[1],\ldots,p^*[N]]^\dagger$. Define
\begin{align}
& f(x[n],y[n],p[n]) \nonumber \\
\triangleq & \log_2 \left( 1+ \frac{ \gamma_0 p[n] } {  ( x[n] - x_{\text{G}} )^2 + ( y[n] - y_{\text{G}} )^2 + H^2 }  \right)  \nonumber  \\
& - \log_2 \left(1+ \frac{ \gamma_0 p[n] } { ( x[n] - x_{\text{E}} )^2 + ( y[n] - y_{\text{E}} )^2 + H^2 }  \right).  \nonumber
\end{align}
We construct a feasible solution to (P3), termed $(\tilde{\mathbf{x}}, \tilde{\mathbf{y}}, \tilde{\mathbf{p}})$, such that $\tilde{\mathbf{x}} = \mathbf{x}^*$, $\tilde{\mathbf{y}} = \mathbf{y}^*$, and the elements of $\tilde{\mathbf{p}}$ are obtained as
\begin{displaymath}
\tilde{p}[n] = \begin{cases} p^*[n] & f(x^*[n],y^*[n],p^*[n]) \geq 0, \\ 0 & f(x^*[n],y^*[n],p^*[n])< 0.    \end{cases}
\end{displaymath}
Denote $\tilde{L}$ as the objective value of (P3) attained at $(\tilde{\mathbf{x}}, \tilde{\mathbf{y}}, \tilde{\mathbf{p}})$. The newly constructed solution $(\tilde{\mathbf{x}}, \tilde{\mathbf{y}}, \tilde{\mathbf{p}})$ ensures that $\tilde{L} = L_1$. Since $(\tilde{\mathbf{x}}, \tilde{\mathbf{y}}, \tilde{\mathbf{p}})$ is a feasible solution to (P3), it follows that $L_3 \geq \tilde{L}$, and thus $L_3 \geq L_1$. Therefore, $L_1 = L_3$, which completes the proof.
\end{proof}

Based on Lemma 1, we only need to focus on solving problem (P3). Although problem (P3) resolves the non-smoothness issue, it is still non-convex and difficult to solve. However, we observe that the constraint \eqref{EquMobilityCon} contains only the variables $(\mathbf{x}, \mathbf{y})$ for UAV trajectory optimization and the constraint \eqref{EquPowerCon} contains only the variables $\mathbf{p}$ for transmit power control. As such, the optimization variables of (P3) can be partitioned into two blocks, i.e., $\mathbf{p}$ and $(\mathbf{x}, \mathbf{y})$, respectively, which facilitates the development of an iterative algorithm for solving problem (P3) by applying the block coordinate descent method. Specifically, we solve problem (P3) by solving the following two sub-problems iteratively: one (denoted by sub-problem 1) optimizes the transmit power $\mathbf{p}$ under given UAV trajectory $(\mathbf{x}, \mathbf{y})$, while the other (denoted by sub-problem 2) optimizes the UAV trajectory $(\mathbf{x}, \mathbf{y})$ under given transmit power $\mathbf{p}$, as detailed in the next two subsections, respectively. Then, we present the overall algorithm and show that it is guaranteed to converge.

\subsection{Sub-Problem 1: Optimizing Transmit Power Given UAV Trajectory}  \label{SecPCDL}
For given UAV trajectory $(\mathbf{x}, \mathbf{y})$, sub-problem 1 can be expressed as
\begin{align}
\max_{\mathbf{p}} \; &  \sum_{n =1}^N \big[ \log_2 \left( 1+ a_n p[n]   \right)  - \log_2 \left(1+ b_n p[n] \right)  \big]   \label{EquSubProb1} \\
\text{s.t.} \; \; & \eqref{EquPowerCon} , \nonumber
\end{align}
where
\begin{equation}  \label{Equa}
a_n = \frac{ \gamma_0 } {  ( x[n] - x_{\text{G}} )^2 + ( y[n] - y_{\text{G}} )^2 + H^2 },
\end{equation}
\begin{equation}   \label{Equb}
b_n = \frac{ \gamma_0  } { ( x[n] - x_{\text{E}} )^2 + ( y[n] - y_{\text{E}} )^2 + H^2 }.
\end{equation}
Although problem \eqref{EquSubProb1} is non-convex, it has been shown in \cite{Gopala2008} and \cite{ZhangTVT2018} that the optimal solution can be obtained as
\begin{equation}   \label{EquOptPowerScheme}
p^{*}[n] = \begin{cases}
\min \left( [ \hat{p}[n] ]^+ , P_{\text{peak}} \right)  & a_n > b_n , \\
0  & a_n \leq b_n,
\end{cases}
\end{equation}
where
\begin{equation}  \label{EquOptPowSol}
\hat{p}[n] = \sqrt{ \left( \frac{1}{2b_n} - \frac{1}{2a_n}  \right)^2 + \frac{1}{\lambda \ln2} \left( \frac{1}{b_n} - \frac{1}{a_n}  \right)  }  -\frac{1}{2b_n} -  \frac{1}{2a_n}.
\end{equation}
In \eqref{EquOptPowSol}, $\lambda \geq 0$ is a constant that ensures the average power constraint $\frac{1}{N}\sum_{ n =1 }^N p[n] \leq \bar{P}$ to be satisfied when the optimal solution of problem \eqref{EquSubProb1} is attained, which can be found efficiently via a one-dimensional bisection search \cite{ZhangTVT2018, Boyd2004}.

\subsection{Sub-Problem 2: Optimizing UAV Trajectory Given Transmit Power}  \label{SecSubProb2}
For given transmit power $\mathbf{p}$, by letting $P_n =  \gamma_0 p[n]$, we can express sub-problem 2 as
\begin{align}
\max_{ \mathbf{x}, \mathbf{y} } \; &  \sum_{n=1}^N \bigg[  \log_2 \left( 1+ \frac{ P_n } {  ( x[n] - x_{\text{G}} )^2 + ( y[n] - y_{\text{G}} )^2 + H^2 }  \right)  \nonumber  \\
& -  \log_2 \left(1+ \frac{ P_n } { ( x[n] - x_{\text{E}} )^2 + ( y[n] - y_{\text{E}} )^2 + H^2 } \right)  \bigg]  \label{EquSubProb2Ori}  \\
\text{s.t.}  \; \; & \eqref{EquMobilityCon}. \nonumber
\end{align}
Note that the objective function of problem \eqref{EquSubProb2Ori} is non-concave with respect to $\mathbf{x}$ and $\mathbf{y}$, so it is a non-convex optimization problem and cannot be solved optimally in general. By introducing slack variables $\mathbf{t} \triangleq \left[ t[1], \ldots, t[N]  \right]^{\dagger}$ and $\mathbf{u} \triangleq \left[ u[1], \ldots, u[N]  \right]^{\dagger}$, we first consider the following problem,
\begin{subequations}   \label{EquSubProb2Reform}
\begin{align}
\max_{ \mathbf{x}, \mathbf{y}, \mathbf{t}, \mathbf{u} } \; &  \sum_{n=1}^N \bigg[  \log_2 \left( 1+ \frac{ P_n } { u[n]  }  \right)  -  \log_2 \left(1+ \frac{ P_n } { t[n] } \right)   \bigg]\label{EquObjSub2}    \\
\text{s.t.}  \; \; \;  &  t[n] - x^2[n] + 2x_{\text{E}}x[n] - x_{\text{E}}^2  - y^2[n] + 2y_{\text{E}}y[n] \nonumber \\
& - y_{\text{E}}^2 - H^2 \leq 0 , \; \forall n,  \label{EquCont}  \\
&  x^2[n] - 2x_{\text{G}}x[n] + x_{\text{G}}^2 +  y^2[n]  - 2y_{\text{G}}y[n] + y_{\text{G}}^2 \nonumber \\
& + H^2 -u[n] \leq 0, \;  \forall n,  \label{EquConu} \\
&  t[n] \geq H^2, \; \forall n , \label{EquContGeq0} \\
& \eqref{EquMobilityCon}.  \nonumber
\end{align}
\end{subequations}
At the optimal solution of problem \eqref{EquSubProb2Reform}, constraints \eqref{EquCont} and \eqref{EquConu} should hold with equalities, since otherwise $t[n]$ ($u[n]$) can be increased (decreased) to improve the objective value. Therefore, problems \eqref{EquSubProb2Ori} and \eqref{EquSubProb2Reform} have the same optimal value and optimal solution of $(\mathbf{x},\mathbf{y})$. Next, we focus on solving problem \eqref{EquSubProb2Reform}.

The term $ \log_2 \left( 1+ \frac{ P_n } { u[n]  }  \right)$ in \eqref{EquObjSub2} is convex with respect to $u[n]$, and the terms $-x^2[n]$ and $-y^2[n]$ in \eqref{EquCont} are concave with respect to $x[n]$ and $y[n]$, respectively. However, a maximization problem with a non-concave objective function and/or a non-convex feasible region is in general non-convex and thus difficult to be solved optimally. Based on the facts that the first-order Taylor expansion of a convex function is its global under-estimator and that of a concave function is its global over-estimator \cite{Boyd2004}, we propose an iterative algorithm to solve problem \eqref{EquSubProb2Reform} approximately by applying the successive convex optimization method. The algorithm obtains an approximate solution to problem \eqref{EquSubProb2Reform} by maximizing a concave lower bound of its objective function within a convex feasible region, which is detailed as follows. First, the algorithm assumes a given initial point $( \mathbf{x}_{\text{fea}}, \mathbf{y}_{\text{fea}}, \mathbf{u}_{\text{fea}})$ which is feasible to \eqref{EquSubProb2Reform}, where $\mathbf{x}_{\text{fea}} \triangleq \left[ x_{\text{fea}}[1], \ldots, x_{\text{fea}}[N] \right]^{\dagger}$, $\mathbf{y}_{\text{fea}} \triangleq \left[ y_{\text{fea}}[1], \ldots, y_{\text{fea}}[N] \right]^{\dagger}$, and $\mathbf{u}_{\text{fea}} \triangleq \left[ u_{\text{fea}}[1], \ldots, u_{\text{fea}}[N] \right]^{\dagger}$. Then, by using the first-order Taylor expansions of $ \log_2 \left( 1+ \frac{ P_n } { u[n]  }  \right)$, $-x^2[n]$, and $-y^2[n]$ at the points $u_{\text{fea}}[n]$, $x_{\text{fea}}[n]$, and $y_{\text{fea}}[n]$, respectively, i.e.,
\begin{align}
 \log_2 \left( 1+ \frac{ P_n } { u[n]  }  \right) \geq & \log_2 \left( 1+ \frac{ P_n } { u_{\text{fea}}[n]  }  \right) \nonumber \\
 &  - \frac{ P_n ( u[n] - u_{\text{fea}}[n] ) }{ \ln2 (u_{\text{fea}}^2[n] + P_n u_{\text{fea}}[n] ) },  \label{EquTayloru}
\end{align}
\begin{equation}   \label{EquTaylorx}
-x^2[n] \leq x_{\text{fea}}^2[n] - 2 x_{\text{fea}}[n] x[n],
\end{equation}
\begin{equation}   \label{EquTaylory}
-y^2[n] \leq y_{\text{fea}}^2[n] - 2 y_{\text{fea}}[n] y[n],
\end{equation}
problem \eqref{EquSubProb2Reform} is approximated as
\begin{subequations}  \label{EquSubProb2Traject}
\begin{align}
\max_{ \mathbf{x},\mathbf{y}, \mathbf{t},\mathbf{u} } \; & \sum_{n=1}^N \bigg[ - \frac{ P_n u[n] }{ \ln 2 (  u_{\text{fea}}^2[n] + P_n u_{\text{fea}}[n] ) }  -  \log_2 \left(1+ \frac{ P_n } { t[n] } \right)    \bigg]\\
\text{s.t.} \; \;  &  t[n] + x_{\text{fea}}^2[n]  - 2 x_{\text{fea}}[n] x[n] + 2x_{\text{E}}x[n] - x_{\text{E}}^2 + y_{\text{fea}}^2[n] \nonumber \\
&  - 2 y_{\text{fea}}[n] y[n]  + 2y_{\text{E}}y[n] - y_{\text{E}}^2 - H^2 \leq 0  , \; \forall n,  \label{EquXYConUB} \\
& \eqref{EquMobilityCon}, \; \eqref{EquConu},\; \eqref{EquContGeq0}. \nonumber
\end{align}
\end{subequations}
After such approximation, we note that the objective function of problem \eqref{EquSubProb2Traject} is concave and its feasible region is convex. Thus, problem \eqref{EquSubProb2Traject} is a convex optimization problem, which can be optimally solved by the interior-point method \cite{Boyd2004}. Since the first-order Taylor expansions of $-x^2[n]$ and $-y^2[n]$ are their global over-estimators, any solution $(x[n], y[n])$ satisfying \eqref{EquXYConUB} will satisfy \eqref{EquCont}. As a result, the solution of problem \eqref{EquSubProb2Traject} is guaranteed to be a feasible solution of problem \eqref{EquSubProb2Reform}. Moreover, the first-order Taylor expansion of $ \log_2 \left( 1+ \frac{ P_n } { u[n]  }  \right)$ is its global under-estimator. As such, problem \eqref{EquSubProb2Traject} maximizes a lower bound of the objective function of problem \eqref{EquSubProb2Reform}, and the lower bound and the objective function of \eqref{EquSubProb2Reform} are equal only at the given point $(\mathbf{x}_{\text{fea}}, \mathbf{y}_{\text{fea}}, \mathbf{u}_{\text{fea}})$; thus, the objective value of problem \eqref{EquSubProb2Reform} with the solution obtained by solving problem \eqref{EquSubProb2Traject} is no smaller than that with the given point $(\mathbf{x}_{\text{fea}}, \mathbf{y}_{\text{fea}}, \mathbf{u}_{\text{fea}})$.

\subsection{Overall Algorithm}  \label{SecOverallAlg}

\begin{algorithm}
	\caption{Proposed Algorithm for Problem (P1).}
	\begin{algorithmic}[1]   \label{AlgDownlink}
		\STATE \textbf{Initialization:} Set $k=0$. Find an initial feasible solution $( \mathbf{p}^{(0)} , \mathbf{x}^{(0)}, \mathbf{y}^{(0)} )$ and an initial slack variable $\mathbf{u}^{(0)}$. Set $R^{(0)} = f_{\text{(P3)}} ( \mathbf{p}^{(0)} , \mathbf{x}^{(0)}, \mathbf{y}^{(0)}  )$.
		\REPEAT
		\STATE Set $k = k+1$.
		\STATE With given $\mathbf{p}^{(k-1)}$, set the feasible solution $\mathbf{x}_{\text{fea}} = \mathbf{x}^{(k-1)}$, $\mathbf{y}_{\text{fea}} = \mathbf{y}^{(k-1)}$ and $\mathbf{u}_{\text{fea}} = \mathbf{u}^{(k-1)}$, then update the trajectory variable $(\mathbf{x}^{(k)}, \mathbf{y}^{(k)})$ and the slack variable $\mathbf{u}^{(k)}$ by solving problem \eqref{EquSubProb2Traject}.
        \STATE With given $(\mathbf{x}^{(k)}, \mathbf{y}^{(k)} )$, update the transmit power control variable $\mathbf{p}^{(k)}$ using \eqref{EquOptPowerScheme}.
        \STATE Set $R^{(k)} = f_{\text{(P3)}} ( \mathbf{p}^{(k)} , \mathbf{x}^{(k)}, \mathbf{y}^{(k)}  )$.
		\UNTIL {$\frac{R^{(k)} - R^{(k-1)}}{R^{(k)}} < \epsilon$.}
	\end{algorithmic}
\end{algorithm}

In summary, the overall algorithm can find a suboptimal solution to problem (P1) by applying the block coordinate descent method, and solves the two sub-problems \eqref{EquSubProb1} and \eqref{EquSubProb2Traject} alternately in an iterative manner. The details of the proposed algorithm are summarized in Algorithm \ref{AlgDownlink}, where $R^{(k)} = f_{\text{(P3)}} ( \mathbf{p}^{(k)}, \mathbf{x}^{(k)}, \mathbf{y}^{(k)}  )$ denotes the objective value of problem (P3) with variables $\mathbf{p}^{(k)}$, $\mathbf{x}^{(k)}$, and $\mathbf{y}^{(k)}$ in iteration $k$, and $\epsilon$ denotes a small positive threshold indicating the accuracy of convergence. The convergence of Algorithm \ref{AlgDownlink} is proved as follows. First, we show that in iteration $k$ ($k \geq 1$) of Algorithm \ref{AlgDownlink}, the objective value of problem (P3) is non-decreasing after executing steps 4 and 5. Denote $\phi(\mathbf{x}, \mathbf{y}, \mathbf{p})$ as the objective value of problem (P3), and $\xi(\mathbf{x}, \mathbf{y}, \mathbf{u}, \mathbf{p})$ and $\xi_{lb}(\mathbf{x}, \mathbf{y}, \mathbf{u}, \mathbf{p})$ as the objective values of problems \eqref{EquSubProb2Reform} and \eqref{EquSubProb2Traject}, respectively. In step 4, we have the following results:
\begin{subequations}	\label{EquIneqStep4}
\begin{align}
& \; \phi(\mathbf{x}^{(k-1)}, \mathbf{y}^{(k-1)}, \mathbf{p}^{(k-1)}) \nonumber \\
= & \;  \xi(\mathbf{x}^{(k-1)}, \mathbf{y}^{(k-1)},\mathbf{u}^{(k-1)}, \mathbf{p}^{(k-1)})  \label{EquIneqStep4a}  \\
= & \; \xi_{lb}(\mathbf{x}^{(k-1)}, \mathbf{y}^{(k-1)},\mathbf{u}^{(k-1)}, \mathbf{p}^{(k-1)})  \label{EquIneqStep4b}  \\
\leq & \; \xi_{lb}(\mathbf{x}^{(k)}, \mathbf{y}^{(k)},\mathbf{u}^{(k)}, \mathbf{p}^{(k-1)})  \label{EquIneqStep4c}  \\
\leq & \; \xi(\mathbf{x}^{(k)}, \mathbf{y}^{(k)},\mathbf{u}^{(k)}, \mathbf{p}^{(k-1)})   \label{EquIneqStep4d}  \\
= & \; \phi(\mathbf{x}^{(k)}, \mathbf{y}^{(k)}, \mathbf{p}^{(k-1)}),   \label{EquIneqStep4e}
\end{align}
\end{subequations}	
where \eqref{EquIneqStep4a} and \eqref{EquIneqStep4e} hold because problems \eqref{EquSubProb2Ori} and \eqref{EquSubProb2Reform} have the same optimal value and optimal solution of $(\mathbf{x},\mathbf{y})$;  \eqref{EquIneqStep4b} holds because the first-order Taylor expansions in \eqref{EquTayloru}, \eqref{EquTaylorx}, and \eqref{EquTaylory} are tight at the feasible point $(\mathbf{x}_{\text{fea}}, \mathbf{y}_{\text{fea}}, \mathbf{u}_{\text{fea}})=(\mathbf{x}^{(k-1)}, \mathbf{y}^{(k-1)}, \mathbf{u}^{(k-1)})$;  \eqref{EquIneqStep4c} holds because $(\mathbf{x}^{(k)}, \mathbf{y}^{(k)}, \mathbf{u}^{(k)})$ is the optimal solution to problem \eqref{EquSubProb2Traject};  \eqref{EquIneqStep4d} holds because the objective value of problem \eqref{EquSubProb2Traject} is a lower bound of that of problem \eqref{EquSubProb2Reform}. Moreover, in step 5, we have the following inequality
\begin{equation} \label{EquIneqStep5}
\phi(\mathbf{x}^{(k)}, \mathbf{y}^{(k)}, \mathbf{p}^{(k-1)}) \leq \phi(\mathbf{x}^{(k)}, \mathbf{y}^{(k)}, \mathbf{p}^{(k)}),
\end{equation}
because $\mathbf{p}^{(k)}$ is the optimal solution to problem \eqref{EquSubProb1}. Based on \eqref{EquIneqStep4} and \eqref{EquIneqStep5}, we obtain
\begin{equation}
\phi(\mathbf{x}^{(k-1)}, \mathbf{y}^{(k-1)}, \mathbf{p}^{(k-1)}) \leq \phi(\mathbf{x}^{(k)}, \mathbf{y}^{(k)}, \mathbf{p}^{(k)}),
\end{equation}
which means that the objective value of problem (P3) is non-decreasing over iterations in Algorithm \ref{AlgDownlink}. In addition, since the optimal value of (P3) is upper-bounded by a finite value, Algorithm \ref{AlgDownlink} is guaranteed to converge. The complexity of Algorithm \ref{AlgDownlink} can be shown to be of $\mathcal{O}(N_{\text{ite}} N^{3.5})$, where $N_{\text{ite}}$ denotes the iteration number.

\section{Proposed Algorithm for Problem (P2)}
In this section, we consider problem (P2) for the G2U case. Similar to (P1), we solve problem (P2) by considering the following equivalent problem,
\begin{align}
\text{(P4)}: & \nonumber \\
\max_{ \mathbf{x}, \mathbf{y}, \mathbf{q} } &  \; \sum_{n=1}^N \bigg[  \log_2 \left( 1+ \frac{ \gamma_0 q[n] } {  ( x[n] - x_{\text{G}} )^2 + ( y[n] - y_{\text{G}} )^2 + H^2 }  \right) \nonumber \\
& \; - \log_2 \left( 1 + \frac{\gamma_0 q[n]}{ d_{\text{GE}}^{\kappa} }  \right)   \bigg] \label{EquP4}  \\
\text{s.t.} &\;   \eqref{EquMobilityCon}, \; \eqref{EquPowerConUL}. \nonumber
\end{align}
Although problem (P4) is non-convex, it has a similar structure as problem (P3), which facilitates us to also apply the block coordinate descent method to find an approximate solution for it. Like (P3), problem (P4) can also be decomposed into two sub-problems: one (denoted by sub-problem 3) is to optimize transmit power $\mathbf{q}$ under given trajectory $(\mathbf{x},\mathbf{y})$; while the other (denoted by sub-problem 4) is to optimize trajectory $(\mathbf{x},\mathbf{y})$ under given transmit power $\mathbf{q}$. The two sub-problems are solved alternately in an iterative manner until convergence. Next, we discuss on how to solve the two sub-problems, respectively.

\subsection{Sub-Problem 3: Optimizing Transmit Power Given UAV Trajectory}  \label{SecPCUL}
With given $(\mathbf{x},\mathbf{y})$, sub-problem 3 can be expressed as
\begin{align}
\max_{\mathbf{q}} \; &  \sum_{n=1}^N  \big[ \log_2 \left( 1+ a_n q[n]   \right)  - \log_2 \left(1+ b q[n] \right) \big] \label{EquSubProb3} \\
\text{s.t.} \; \; & \eqref{EquPowerConUL} ,  \nonumber
\end{align}
where $a_n$ is defined in \eqref{Equa} and
\begin{equation}  \label{Equb2}
b = \frac{ \gamma_0  } { d_{\text{GE}}^{\kappa} }.
\end{equation}
Problem \eqref{EquSubProb3} is similar to sub-problem 1 given in \eqref{EquSubProb1}, thus it can be solved by using \eqref{EquOptPowerScheme} and \eqref{EquOptPowSol}, provided that $b_n$ in \eqref{EquOptPowerScheme} and \eqref{EquOptPowSol} is replaced with $b$ in \eqref{Equb2}.

\subsection{Sub-Problem 4: Optimizing UAV Trajectory Given Transmit Power}
With given $\mathbf{q}$, by letting $Q_n = \gamma_0 q[n]$ and removing the terms in the objective function in \eqref{EquP4} which are irrelevant to $\mathbf{x}$ and $\mathbf{y}$, we express sub-problem 4 as
\begin{align}
\max_{ \mathbf{x}, \mathbf{y} } \; &  \sum_{n=1}^N \log_2 \left( 1+ \frac{ Q_n } {  ( x[n] - x_{\text{G}} )^2 + ( y[n] - y_{\text{G}} )^2 + H^2 }  \right)  \label{EquSubProb4Ori}  \\
\text{s.t.}  \; \; & \eqref{EquMobilityCon}. \nonumber
\end{align}
Unlike sub-problem 2 given in \eqref{EquSubProb2Ori} for the U2G case, problem \eqref{EquSubProb4Ori} is simplified as maximizing only the average achievable rate from the ground node to the UAV. This is because the trajectory of the UAV determines only the channel gain from the ground node to the UAV, but does not have any effect on the channel from the ground node to the eavesdropper.

Despite the non-convexity of problem \eqref{EquSubProb4Ori}, we apply the successive convex optimization to approximately solve it, similar to problem \eqref{EquSubProb2Ori}. First, we introduce slack variable $\mathbf{u} \triangleq [ u[1],\ldots,u[N] ]^\dagger$, and solve the following problem which has the same optimal solution of $(\mathbf{x}, \mathbf{y})$ as problem \eqref{EquSubProb4Ori},
\begin{subequations}   \label{EquSubProb4Reform}
\begin{align}
\max_{ \mathbf{x}, \mathbf{y}, \mathbf{u} } \; & \sum_{n=1}^N \log_2 \left( 1+ \frac{Q_n}{u[n]}  \right)  \label{EquSP4Obj}   \\
\text{s.t.} \; \; & ( x[n] - x_{\text{G}} )^2 + ( y[n] - y_{\text{G}} )^2 + H^2 - u[n] \leq 0, \; \forall n,  \label{EquConU}  \\
& \eqref{EquMobilityCon}. \nonumber
\end{align}
\end{subequations}
With a given initial point $\mathbf{u}_{\text{fea}} \triangleq [ u_{\text{fea}}[1], \ldots, u_{\text{fea}}[N] ]^\dagger$, which is feasible to \eqref{EquSubProb4Reform}, and by applying the first-order Taylor expansion of $\log_2 \big( 1+ \frac{Q_n}{u[n]} \big)$ given in \eqref{EquTayloru} (where $P_n$ is replaced by $Q_n$), problem \eqref{EquSubProb4Reform} is recast as
\begin{align}
\max_{ \mathbf{x}, \mathbf{y}, \mathbf{u} } \; & \sum_{n=1}^N - \frac{Q_n u[n]}{ \ln 2 ( u_{\text{fea}}^2[n] + Q_n u_{\text{fea}}[n] )}   \label{EquSubProb4Recast} \\
\text{s.t.} \; \;& \eqref{EquMobilityCon}, \; \eqref{EquConU}. \nonumber
\end{align}
In can be shown that problem \eqref{EquSubProb4Recast} is a convex quadratically constrained quadratic programming (QCQP) problem, and thus can be efficiently solved by the interior-point method \cite{Boyd2004}. The details of the overall algorithm for solving (P2) are omitted for brevity, given the similarity to that for (P1).

\section{Simulation Results}
In this section, we provide simulation results to verify the performance of our proposed joint UAV trajectory optimization and transmit power control algorithm (denoted as T-OPT-With-PC). For comparison, we also consider the following three benchmark schemes without optimized trajectory and/or power control:
\begin{itemize}
\item Trajectory optimization without transmit power control (denoted as T-OPT-Without-PC);
\item Best-effort trajectory design with transmit power control (denoted as BET-With-PC);
\item Best-effort trajectory design without transmit power control (denoted as BET-Without-PC).
\end{itemize}
Specifically, the T-OPT-Without-PC algorithm designs the UAV trajectories for the U2G and G2U cases by solving problems \eqref{EquSubProb2Traject} and \eqref{EquSubProb4Recast} iteratively until convergence, respectively, with transmit power equally set over time, i.e., $p[n]=\bar{P}$ and $q[n]=\bar{Q}$, $\forall n$. The complexity of the T-OPT-Without-PC algorithm is $\mathcal{O}( N_{\text{ite}} N^{3.5} )$, where $N_{\text{ite}}$ denotes the number of iterations required for convergence \cite{Cui2018}. Both the BET-With-PC and BET-Without-PC algorithms design the UAV trajectories in the following heuristic best-effort manner: the UAV first flies straight to the point above the ground node at its maximum speed, then remains static at that point (if time permits), and finally flies at its maximum speed to reach its final location by the end of the last time slot. Note that if the UAV does not have sufficient time to reach the point above the ground node, it will turn at a certain midway point and then fly to the final location at the maximum speed. Given this trajectory, the BET-With-PC algorithm optimizes the transmit power in the U2G or G2U case by solving problem \eqref{EquSubProb1} or \eqref{EquSubProb3}, respectively; while the BET-Without-PC algorithm sets transmit power equally over time, i.e., $p[n]=\bar{P}$ and $q[n]=\bar{Q}$, $\forall n$. The complexities of the BET-With-PC and BET-Without-PC algorithms are both $\mathcal{O}( N )$. The initial feasible solutions for the proposed T-OPT-With-PC and benchmark T-OPT-Without-PC algorithms are generated by the BET-Without-PC algorithm.

The coordinates of the ground node and the eavesdropper are set as $(0,0,0)$m and $(200, 0, 0)$m, respectively, and the flying altitude of the UAV is set as $H=100$m. The maximum speed of the UAV is $v_{\max}=3$m/s. The flight period $T$ is divided into multiple time slots with equal length of $d_t=0.5$s. The communication bandwidth is $20$MHz with the carrier frequency at $5$GHz, and the noise power spectrum density is $-169$dBm/Hz. Thus, the reference SNR at the reference distance $d_0=1$m is $\gamma_0=80$dB. The peak transmit power limits for the U2G and G2U links are set as $P_{\text{peak}} =4 \bar{P}$ and $Q_{\text{peak}} =4 \bar{Q}$, respectively. The threshold in Algorithm \ref{AlgDownlink} is set as $\epsilon=10^{-4}$. We consider two cases, denoted as Case 1 and Case 2, in which the UAV has different initial and final locations. In Case 1, the UAV's initial and final locations are on the vertical line in the middle of the ground node and the eavesdropper, and the coordinates of them are set as $(x_0,y_0)=(100,600)$m and $(x_F,y_F)=(100,-600)$m, as shown in Fig. \ref{FigTraj_DL_Ver}. In Case 2, the UAV's initial and final locations are on the parallel line of that connecting the ground node and the eavesdropper, and the coordinates of them are set as $(x_0,y_0)=(-500,-150)$m and $(x_F,y_F)=(700,-150)$m, as shown in Figs. \ref{FigTraj_DL_Para} and \ref{FigTraj_UL_Para}.

\subsection{U2G Communication}
\begin{figure}[!t]
	\centering
	\includegraphics[width=\columnwidth]{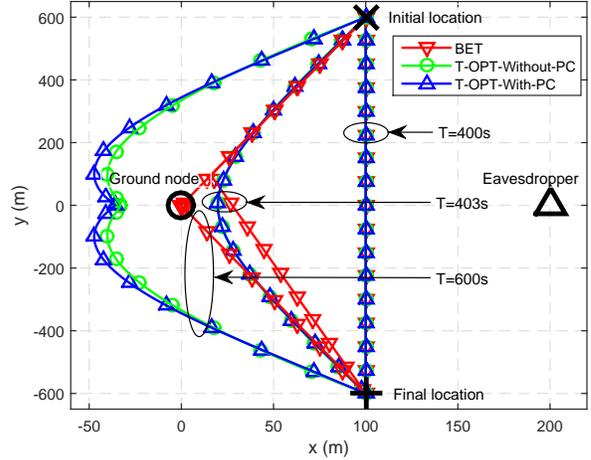}
	\caption{Trajectories of the UAV for the U2G communication in Case 1.}  \label{FigTraj_DL_Ver}
\end{figure}

In the U2G case, we first consider Case 1. Fig. \ref{FigTraj_DL_Ver} shows the trajectories of the UAV by applying different algorithms with different values of flight period $T$. The average transmit power is set as $\bar{P}=-5$dBm. The locations of the ground node, eavesdropper, as well as the UAV's initial and final locations are marked with $\bigcirc$, $\triangle$, $\times$, and $+$, respectively. It is observed that when $T=400$s which is the minimum time required for the UAV to fly from the initial location to the final location in a straight line at the maximum speed, the trajectories of all algorithms are identical. As $T$ increases, the trajectories by the proposed T-OPT-With-PC and the benchmark T-OPT-Without-PC algorithms are still similar, i.e., the UAV tries to fly as close as possible to the ground node and at the same time as far away as possible to the eavesdropper, while they appear different compared to that by the heuristic best-effort trajectory (BET) design (same for with and without power control). When $T$ is sufficiently large, i.e. $T = 600$s, for the proposed T-OPT-With-PC algorithm or the benchmark T-OPT-Without-PC algorithm, it is observed that the UAV first flies at the maximum speed to reach a certain location on the left of the ground node (not directly above it), then remains stationary at this location as long as possible, and finally flies to the final location in an arc path at the maximum speed and reaches there by the end of the last time slot. These stationary locations, which are on the opposite direction of the eavesdropper, strike an optimal balance between enhancing the legitimate link channel and degrading the eavesdropping link channel and hence maximize the secrecy rate in each of these two cases. In fact, this is also why the UAV follows an arc path rather than the straight path as in the heuristic BET design.

\begin{figure}[!t]
	\centering
	\includegraphics[width=\columnwidth]{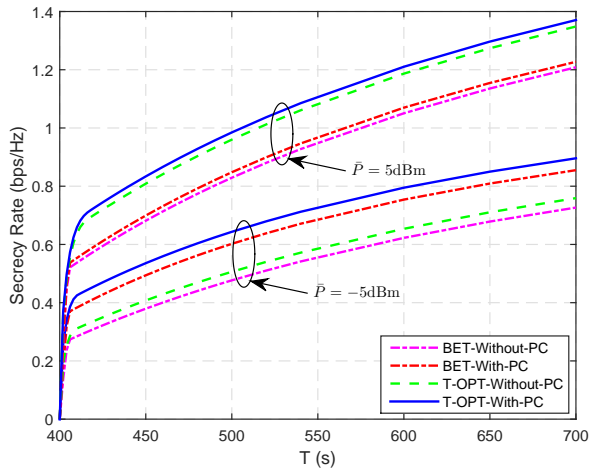}
	\caption{Secrecy rate versus flight period $T$ for the U2G communication in Case 1.}  \label{FigSRvsT_DL_Ver}
\end{figure}

\begin{figure}
	\centering	\includegraphics[width=\columnwidth]{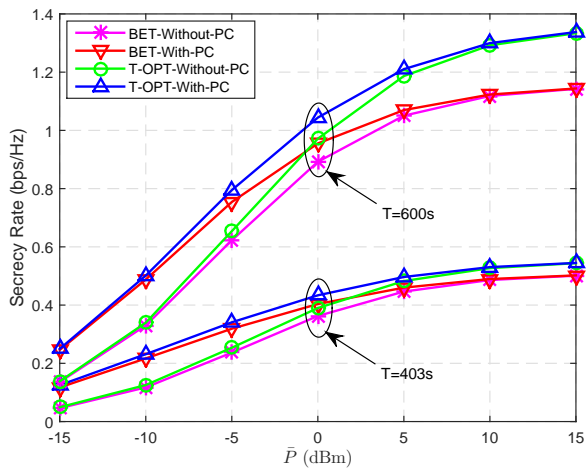}
	\caption{Secrecy rate versus average transmit power $\bar{P}$ for the U2G communication in Case 1.}  \label{FigSRvsP_DL_Ver}
\end{figure}

Fig. \ref{FigSRvsT_DL_Ver} shows the corresponding average secrecy rates of different algorithms versus flight period $T$ when $\bar{P}=-5$dBm and $5$dBm. It is observed that the secrecy rates of all algorithms increase significantly with $T$. This is because for all algorithms the maximum secrecy rate is achieved at their respective stationary locations (see Fig. \ref{FigTraj_DL_Ver}) for sufficiently large $T$, and larger $T$ results in longer hovering time at such locations for the UAV. The proposed T-OPT-With-PC algorithm always achieves the highest secrecy rate, while the benchmark BET-Without-PC algorithm has the lowest secrecy rate, as expected. When $\bar{P}=-5$dBm, the benchmark BET-With-PC algorithm has higher secrecy rate than that of the benchmark T-OPT-Without-PC algorithm. In contrast, when $\bar{P}=5$dBm, the latter algorithm has higher secrecy rate than the former. Such results suggest that in the low transmit power regime, power control is more important for improving the secrecy rate; while in the high transmit power regime, trajectory optimization is more significant. Furthermore, it is worth pointing out that trajectory adaptation with increasing $T$ is essential for the secrecy rate improvement, even for the case with heuristic BET design. Otherwise, if the trajectory is fixed (e.g., following the straight line from the initial to final location with constant speed of $\sqrt{(x_F -x_0)^2 + (y_F - y_0)^2}/T$ regardless of $T$, then the secrecy rate will remain unchanged as the case with required minimum $T=400$s in Fig. \ref{FigSRvsT_DL_Ver}, i.e., the UAV cannot exploit its mobility to improve the secrecy rate, even with sufficiently large $T$.

Fig. \ref{FigSRvsP_DL_Ver} shows the average secrecy rates of different algorithms versus the average transmit power $\bar{P}$ when $T=403$s and $600$s. It is observed that the proposed T-OPT-With-PC algorithm always achieves the highest secrecy rate, while the benchmark BET-Without-PC algorithm has the lowest secrecy rate. In addition, when $\bar{P} \leq -5$dB, we note that the benchmark BET-With-PC algorithm achieves a secrecy rate performance close to the proposed T-OPT-With-PC algorithm, and also significantly outperforms the benchmark T-OPT-Without-PC algorithm. However, when $\bar{P}$ increases, the secrecy rate of the benchmark T-OPT-Without-PC algorithm will eventually exceed that of the benchmark BET-With-PC algorithm and even get closer to that of the proposed T-OPT-With-PC algorithm. Furthermore, the rate gap between the benchmark T-OPT-Without-PC and BET-With-PC algorithms becomes larger with increasing $\bar{P}$. The above results further demonstrate that transmit power control is more effective in improving secrecy rate than trajectory optimization when the average transmit power is low, while trajectory optimization is more effective when the average transmit power is high.

\begin{figure}[!t]
	\centering
	\includegraphics[width=\columnwidth]{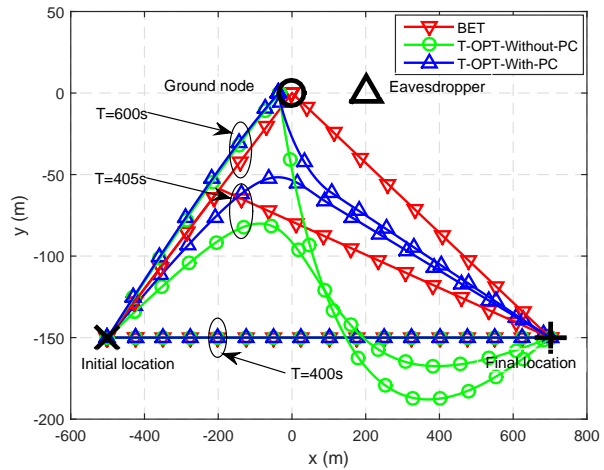}
	\caption{Trajectories of the UAV for the U2G communication in Case 2.}  \label{FigTraj_DL_Para}
\end{figure}

\begin{figure}
	\centering
	\includegraphics[width=\columnwidth]{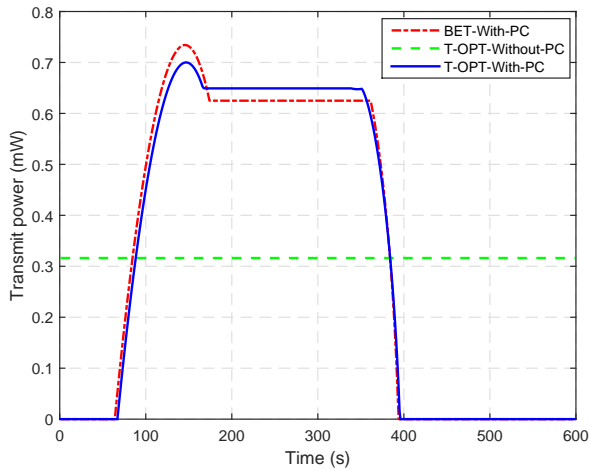}
	\caption{Transmit power versus time slot for the U2G communication in Case 2 when $T=600$s.}  \label{FigPowvsTime_DL_Para}
\end{figure}

\begin{figure}[!t]
	\centering
	\includegraphics[width=\columnwidth]{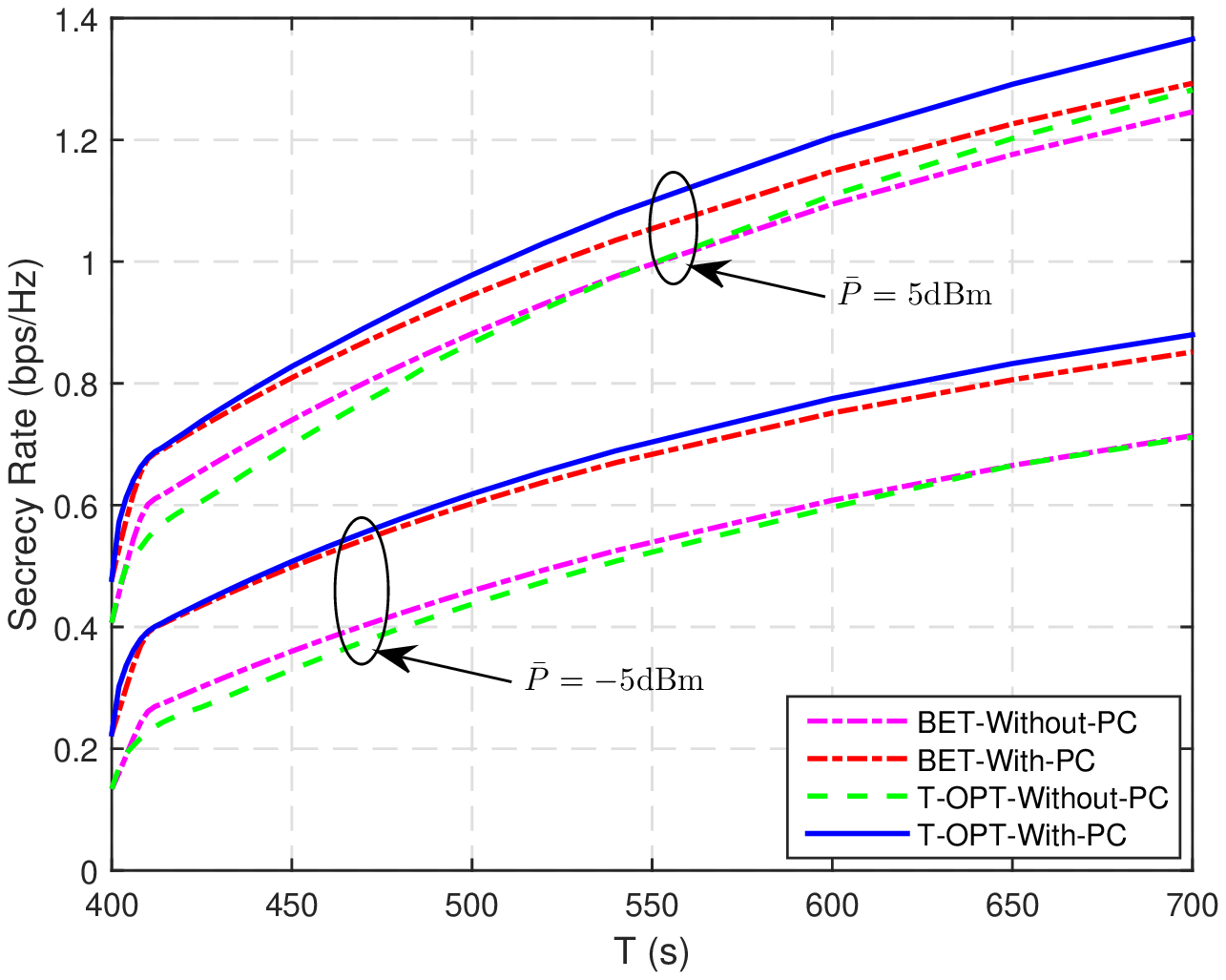}
	\caption{Secrecy rate versus flight period $T$ for the U2G communication in Case 2.}  \label{FigSRvsT_DL_Para}
\end{figure}

\begin{figure}[!t]
	\centering
	\includegraphics[width=\columnwidth]{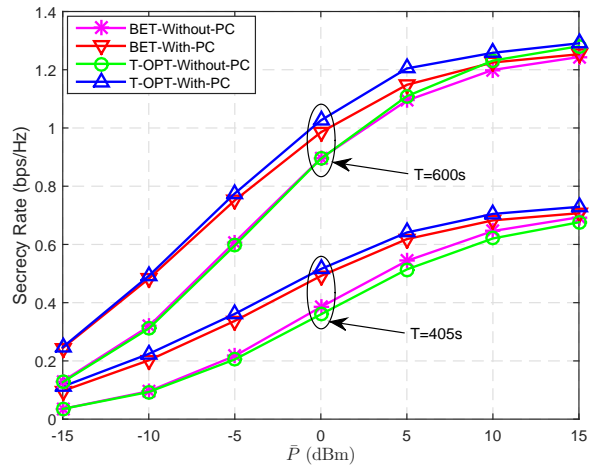}
	\caption{Secrecy rate versus average transmit power $\bar{P}$ for the U2G communication in Case 2.}  \label{FigSRvsP_DL_Para}
\end{figure}

Next, we consider Case 2. Fig. \ref{FigTraj_DL_Para} shows the trajectories of the UAV by using different algorithms when $\bar{P}=-5$dBm. It is observed that different from the results in Case 1 shown in Fig. \ref{FigTraj_DL_Ver}, the trajectories by the proposed T-OPT-With-PC and benchmark T-OPT-Without-PC algorithms with $T=405$s or $600$s differ significantly, especially when the UAV flies towards the final location. For the proposed algorithm, the UAV flies along a relatively direct path towards the ground node and then towards the final location. In contrast, for this benchmark, the UAV first flies almost directly to the ground node, but then flies along an arc path to the final location, which inevitably consumes more time on the traveling compared to the trajectory of the proposed algorithm. The reason for such a difference is that in Case 2, flying from the ground node towards the final location reduces the distance from the UAV to the eavesdropper less much as compared to that from it to the ground node, which is undesired. This means that to improve the secrecy rate, the UAV should reduce transmit power when it gets farther away from the ground node and closer to the final location. Considering this fact, the proposed T-OPT-With-PC algorithm is able to decrease the UAV transmit power or even turn off the transmitter to save power and also protect from eavesdropping when the UAV flies directly towards the final location. However, for the benchmark T-OPT-Without-PC algorithm that employs a constant transmit power, the UAV can only rely on adjusting its trajectory to keep far away from the eavesdropper to avoid being eavesdropped, which however requires more traveling time and leads to a longer arc trajectory. This fact is verified by Fig. \ref{FigPowvsTime_DL_Para}, which shows the transmit power of the UAV over time slot when the flight period is $T=600$s. It is observed that both the proposed T-OPT-With-PC and benchmark BET-With-PC algorithms increase UAV transmit power when the UAV gets closer to the ground node, and reduce UAV transmit power when the UAV gets farther away from the ground node and closer to the final location. When the UAV is in the zone where the distance from it to the ground node is larger than that to the eavesdropper, the proposed T-OPT-With-PC and benchmark BET-With-PC algorithms set UAV transmit power to zero.

With the UAV trajectory difference in Fig. \ref{FigTraj_DL_Para}, the secrecy rate performances of different algorithms versus $T$ and $\bar{P}$, which are shown in Figs. \ref{FigSRvsT_DL_Para} and \ref{FigSRvsP_DL_Para} respectively, are also quite different from those shown in Figs. \ref{FigSRvsT_DL_Ver} and \ref{FigSRvsP_DL_Ver} for Case 1. Specifically, the secrecy rate gaps between the proposed T-OPT-With-PC and benchmark T-OPT-Without-PC algorithms versus $T$ or $\bar{P}$ in Case 2 are significantly larger than those in Case 1. For example, in Fig. \ref{FigSRvsT_DL_Para}, the T-OPT-Without-PC algorithm even has lower secrecy rate than the BET-Without-PC algorithm in the regime of $T \leq 650$s when $\bar{P}=-5$dBm or in the regime of $T\leq 550$s when $\bar{P}=5$dBm. In addition, in Fig. \ref{FigSRvsP_DL_Para}, the T-OPT-Without-PC algorithm has a lower secrecy rate than the BET-Without-PC algorithm over the whole $\bar{P}$ regime when $T=405$s and in the regime of $\bar{P} \leq 0$dBm when $T=600$s. This is mainly because the UAV wastes more time on travelling along a longer arc trajectory to reach the final location which in turn leads to the inefficient use of the transmit power. The above results demonstrate the importance and necessity of the joint UAV trajectory optimization and transmit power control in maximizing the secrecy rate for U2G communication.

From Figs. \ref{FigTraj_DL_Ver}--\ref{FigSRvsP_DL_Para}, it is observed that although the proposed T-OPT-With-PC algorithm always achieves the highest secrecy rate in all cases, other benchmark algorithms of lower complexity may achieve reasonably good performance as compared to the proposed algorithm in certain cases. As such, depending on the system parameters (e.g., average transmit power, flight period, and the UAV's initial and final locations), the UAV may adopt different algorithms to strike a balance between the achievable performance and computational complexity. In particular, for the case with short $T$ and/or high $\bar{P}$, trajectory optimization is generally less effective as compared to transmit power control in improving secrecy rate, thus the benchmark algorithm BET-With-PC performs very close to the proposed algorithm; while if both the initial and final locations of the UAV are closer to the ground node than the eavesdropper, trajectory optimization is more effective, thus T-OPT-Without-PC is a good choice from both performance and complexity considerations.

\subsection{G2U Communication}

In the G2U case, the channel gain from the ground node to the eavesdropper given in \eqref{EquChGain} contains a small-scale Rayleigh fading term $\zeta$. Thus, all secrecy rate results in the following are averaged over $5000$ random independent realizations of $\zeta$, where the path-loss exponent is set as $\kappa=3$. Since our results obtained for Cases 1 and 2 lead to consistent observations, we only present the results for Case 2 due to the space limitation.

Fig. \ref{FigTraj_UL_Para} shows the trajectories of the UAV with different values of $T$ when the average transmit power is $\bar{Q}=-5$dBm. It is observed that the trajectories of the proposed T-OPT-With-PC and benchmark T-OPT-Without-PC algorithms are very similar for different values of $T$, i.e., the UAV tries to fly as close as possible to the ground node as $T$ increases. When $T$ is sufficiently large, i.e., $T=600$s, the trajectories of them are the same as the heuristic BET design, i.e., the UAV first flies at the maximum speed to reach the point right above the ground node, then remains static as long as possible, and finally flies to the final location directly at the maximum speed in order to reach there by the end of the last time slot. The fundamental reason of such a result is that in the G2U setup, the channel between the ground node (transmitter) and the eavesdropper is independent of the UAV's location. Therefore, the UAV trajectory is only optimized to maximize achievable rate from the ground node to the UAV. Obviously, the point right above the ground node is the best location for achieving its largest rate. This explains why the optimized trajectory also converges to the heuristic BET design when $T$ is sufficiently large.

\begin{figure}[!t]
	\centering	\includegraphics[width=\columnwidth]{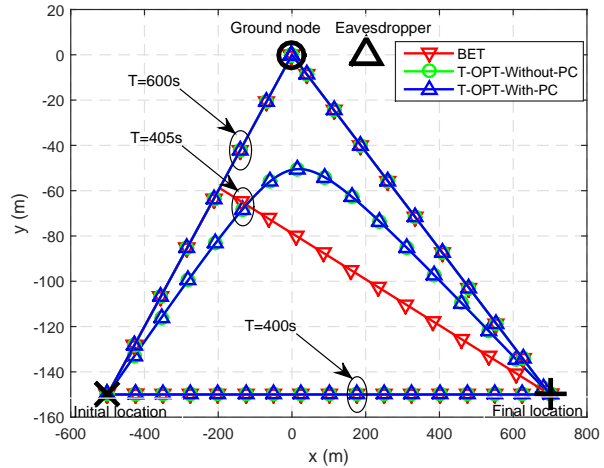}
	\caption{Trajectories of the UAV for the G2U communication in Case 2.}  \label{FigTraj_UL_Para}
\end{figure}

Fig. \ref{FigSRvsT_UL_Para} shows the average secrecy rates of different algorithms versus flight period $T$ when $\bar{Q}=-5$dBm and $5$dBm. When $T \geq 410$s, the algorithms with transmit power control, i.e. the proposed T-OPT-With-PC and benchmark BET-With-PC algorithms, achieve the same secrecy rate since they have the same trajectory and hence the same transmit power control, while they both outperform the benchmark algorithms without power control, i.e., the T-OPT-Without-PC and BET-Without-PC algorithms. These results suggest that transmit power control is more effective than trajectory optimization in improving secrecy rate in the G2U case, as shown in Fig. \ref{FigTraj_UL_Para}, since the optimal trajectory can be easily achieved by the BET design when $T$ is sufficiently large. Furthermore, the secrecy rate gap between the algorithms with and without power control when $\bar{Q}=-5$dBm is significantly larger than that when $\bar{Q}=5$dBm, since power control is more effective when the average transmit power is low.

\begin{figure}[!t]
	\centering	\includegraphics[width=\columnwidth]{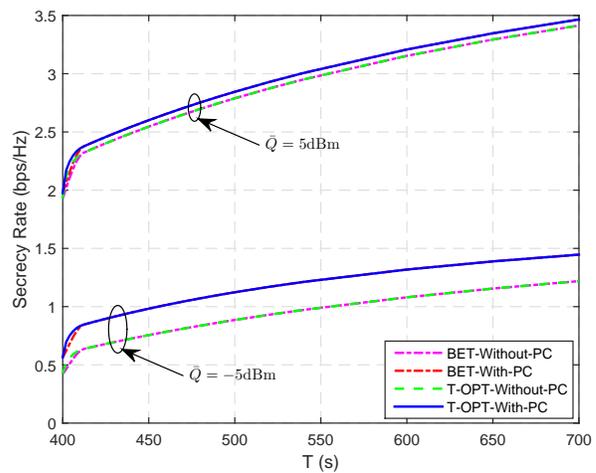}
	\caption{Secrecy rate versus flight period $T$ for the G2U communication in Case 2.}  \label{FigSRvsT_UL_Para}
\end{figure}

\begin{figure}[!t]
	\centering	\includegraphics[width=\columnwidth]{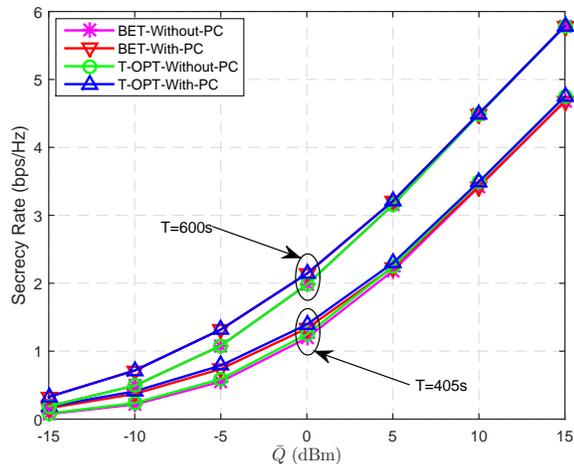}
	\caption{Secrecy rate versus average transmit power $\bar{Q}$ for the G2U communication in Case 2.}  \label{FigSRvsP_UL_Para}
\end{figure}

Fig. \ref{FigSRvsP_UL_Para} shows the average secrecy rates of different algorithms versus average transmit power $\bar{Q}$ when $T=405$s and $600$s. It can be also observed that transmit power control is effective for improving secrecy rate when $\bar{Q} \leq 0$dBm. When $T=405$s, the secrecy rate gap between the proposed T-OPT-With-PC algorithm and the benchmark algorithms with BET design exists due to their trajectory difference. When $T=600$s, the secrecy rates of all algorithms tend to be very similar when $\bar{Q} \geq 10$dBm. This is because their trajectories are the same and the power control only provides marginal rate gain when transmit power is high.

\section{Conclusion}
In this paper, we study the physical layer security for emerging UAV communications in the forthcoming 5G wireless networks. Specifically, we propose to enhance the security performance by proactively controlling channel gains via adjusting the UAV trajectory in addition to applying the conventional power/rate adaptation, which leads to a new joint optimization framework. For both the U2G and G2U communications, the transmit power control and UAV trajectory are jointly designed to maximize the average secrecy rate over a finite horizon, subject to the average and peak transmit power constraints as well as practical UAV's mobility constraints. By applying the block coordinate descent and successive convex optimization methods, efficient iterative algorithms are proposed to solve the joint design problems. Simulation results show that joint trajectory optimization and transmit power control improves the physical layer security performance, and more significantly in the U2G case compared to the G2U case, as the UAV trajectory in the U2G case has an effect on both the legitimate and eavesdropping channels, instead of the legitimate channel only in the G2U case. Furthermore, it is found that both UAV trajectory optimization and transmit power control are generally necessary in the U2G case; while in the G2U case, transmit power control is more effective than UAV trajectory optimization for improving the secrecy rate performance, and the heuristic best-effort trajectory already performs quite close to the optimized trajectory.

\begin{IEEEbiography}[{\includegraphics[width=1in,height=1.25in,clip,keepaspectratio]{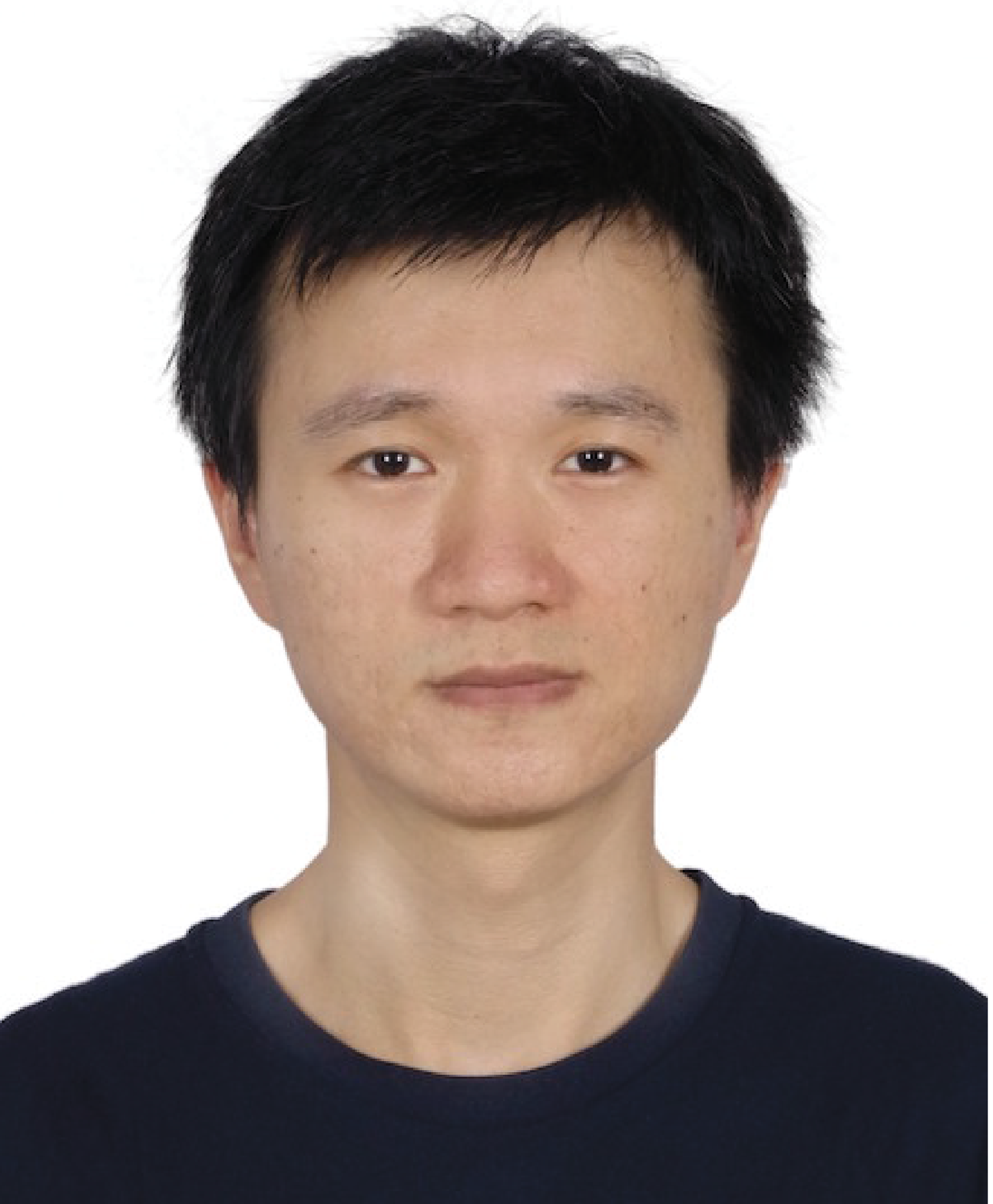}}]{Guangchi Zhang} (M'13) received the B.S. degree in electronic engineering from the Nanjing University, Nanjing, China, in 2004, and the Ph.D. degree in communication engineering from the Sun Yat-Sen University, Guangzhou, China, in 2009. He has been with the Guangdong University of Technology since 2009. He was a Senior Research Associate with the City University of Hong Kong from Oct. 2011 to Mar. 2012 and a Visiting Professor with the National University of Singapore from Jan. 2017 to Jan. 2018. He is currently a Professor with the School of Information Engineering, Guangdong University of Technology, Guangzhou, China. His research interests include MIMO and relay wireless communications, wireless power transfer, unmanned aerial vehicle communications, and physical layer security. He was a recipient of the IEEE Communications Society 2014 Heinrich Hertz Award and the IEEE Communication Letters 2014 Exemplary Reviewer.
\end{IEEEbiography}

\begin{IEEEbiography}
[{\includegraphics[width=1in,height=1.25in,clip,keepaspectratio]{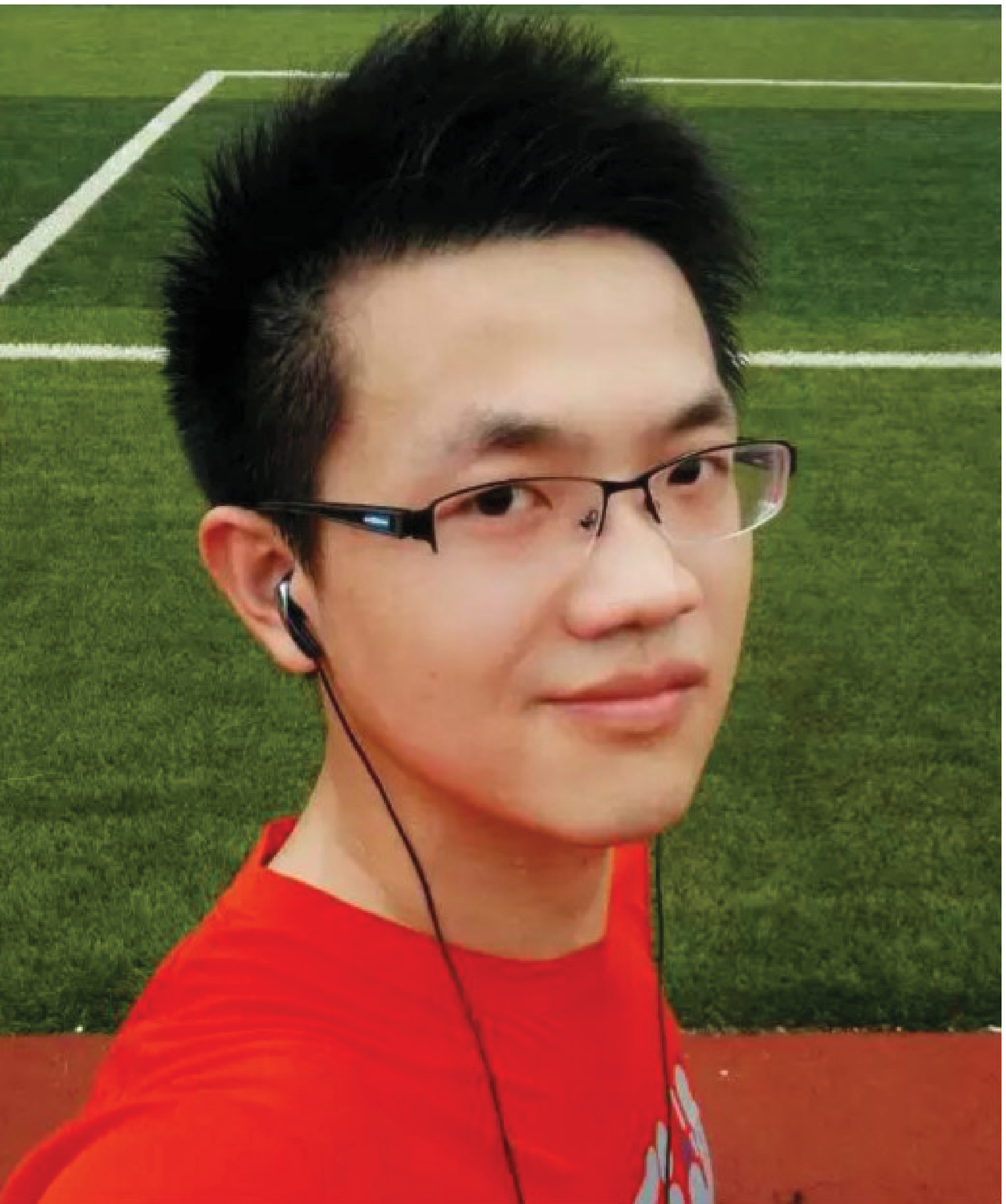}}]{Qingqing Wu} (S'13-M'16) received B.Eng. and the Ph.D. degrees in Electronic Engineering from South China University of Technology and Shanghai Jiao Tong University (SJTU), China, in 2012 and 2016 (in advance), respectively. He is now a Research Fellow in National University of Singapore. He received the IEEE WCSP Best Paper Award in 2015, the Exemplary Reviewer of IEEE Communications Letters in 2016 and 2017, and the Exemplary Reviewer of IEEE Transactions on Communications and IEEE Transactions on Wireless Communications in 2017. He was the recipient of the Outstanding Ph.D. Thesis Funding in SJTU in 2016 and the Best Ph.D. Thesis Award of China Institute of Communications in 2017. He served as a TPC member of IEEE ICC, GLOBECOM, WCNC, VTC, APCC, WCSP, etc. He is currently an Editor of IEEE Communications
Letters and the workshop co-chair of ICC 2019. His research interests include intelligent reflecting surface (IRS), energy-efficient wireless communications, wireless power transfer, and unmanned aerial vehicle (UAV) communications.
\end{IEEEbiography}

\begin{IEEEbiography}
[{\includegraphics[width=1in,height=1.25in,clip,keepaspectratio]{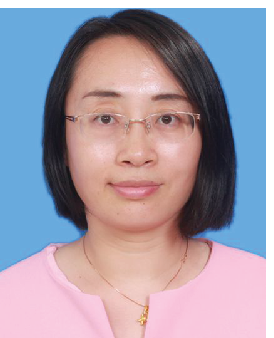}}]{Miao Cui} received the B.E. degree in communication engineering and the M.S. degree in computer science from the Northeast Electric Power University, Jilin, China, in 2001 and 2003, respectively, and the Ph.D. degree in circuit system from the South China University of Technology, Guangzhou, China, in 2009. She is currently a Lecturer with the Guangdong University of Technology, Guangzhou, China. Her research interests include the analysis, optimization, and design of wireless networks.
\end{IEEEbiography}

\begin{IEEEbiography}
[{\includegraphics[width=1in,height=1.25in,clip,keepaspectratio]{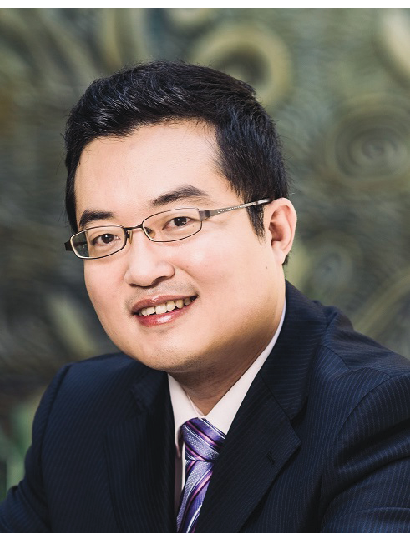}}]{Rui Zhang}  (S'00-M'07-SM'15-F'17) received the B.Eng. (first-class Hons.) and M.Eng. degrees from the National University of Singapore, Singapore, and the Ph.D. degree from the Stanford University, Stanford, CA, USA, all in electrical engineering.

From 2007 to 2010, he worked as a Research Scientist with the Institute for Infocomm Research, ASTAR, Singapore. Since 2010, he has joined the Department of Electrical and Computer Engineering, National University of Singapore, where he is now a Dean's Chair Associate Professor in the Faculty of Engineering. He has authored over 300 papers. He has been listed as a Highly Cited Researcher (also known as the World's Most Influential Scientific Minds), by Thomson Reuters (Clarivate Analytics) since 2015. His research interests include UAV/satellite communication, wireless information and power transfer, multiuser MIMO, smart and reconfigurable environment, and optimization methods.

He was the recipient of the 6th IEEE Communications Society Asia-Pacific Region Best Young Researcher Award in 2011, and the Young Researcher Award of National University of Singapore in 2015. He was the co-recipient of the IEEE Marconi Prize Paper Award in Wireless Communications in 2015, the IEEE Communications Society Asia-Pacific Region Best Paper Award in 2016, the IEEE Signal Processing Society Best Paper Award in 2016, the IEEE Communications Society Heinrich Hertz Prize Paper Award in 2017, the IEEE Signal Processing Society Donald G. Fink Overview Paper Award in 2017, and the IEEE Technical Committee on Green Communications  \& Computing (TCGCC) Best Journal Paper Award in 2017. His co-authored paper received the IEEE Signal Processing Society Young Author Best Paper Award in 2017. He served for over 30 international conferences as the TPC co-chair or an organizing committee member, and as the guest editor for 3 special issues in the IEEE JOURNAL OF SELECTED TOPICS IN SIGNAL PROCESSING and the IEEE JOURNAL ON SELECTED AREAS IN COMMUNICATIONS. He was an elected member of the IEEE Signal Processing Society SPCOM Technical Committee from 2012 to 2017 and SAM Technical Committee from 2013 to 2015, and served as the Vice Chair of the IEEE Communications Society Asia-Pacific Board Technical Affairs Committee from 2014 to 2015. He served as an Editor for the IEEE TRANSACTIONS ON WIRELESS COMMUNICATIONS from 2012 to 2016, the IEEE JOURNAL ON SELECTED AREAS IN COMMUNICATIONS: Green Communications and Networking Series from 2015 to 2016, and the IEEE TRANSACTIONS ON SIGNAL PROCESSING from 2013 to 2017. He is now an Editor for the IEEE TRANSACTIONS ON COMMUNICATIONS and the IEEE TRANSACTIONS ON GREEN COMMUNICATIONS AND NETWORKING. He serves as a member of the Steering Committee of the IEEE Wireless Communications Letters. He is an IEEE Signal Processing Society Distinguished Lecturer.
\end{IEEEbiography}


\begin{thebibliography}{1}
\bibitem{Zhang2017GC}
G. Zhang, Q. Wu, M. Cui, and R. Zhang, ``Securing UAV communications via trajectory optimization,'' in \textit{Proc. IEEE GLOBECOM}, pp. 1-6, Dec. 2017.


\bibitem{Zeng2016}
Y. Zeng, R. Zhang, and T. J. Lim, ``Wireless communications with unmanned aerial vehicles: opportunities and challenges,'' \textit{IEEE Commun. Mag.}, vol. 54, no. 5, pp. 36-42, May 2016.	

\bibitem{wu:magazine}
Q. Wu, G. Y. Li, W. Chen, D. W. K. Ng, and R. Schober, ``An overview of sustainable green 5G communications,''  \textit{IEEE Wirless Commun.}, vol. 24, no. 4, pp. 72-80, Aug. 2017.

\bibitem{WuSurvey}
S. Zhang, Q. Wu, S. Xu, and G. Y. Li, ``Fundamental green tradeoffs: progresses, challenges, and impacts on 5G networks,`` \textit{IEEE  Commun. Surveys and Tutorials}, vol. 19, no. 1, pp. 33-56, First quarter 2017.


\bibitem{Qualcomm2016}
``Paving the path to 5G: optimizing commercial LTE networks for drone communication,'' [Online]. Available: https://www.qualcomm.com/news/onq/2016/09/06/paving-path-5g-optimizing-commercial-lte-networks-drone-communication.


\bibitem{Ericsson2016}
``Ericsson and China Mobile conduct world's first 5G drone prototype field trial,'' [Online]. Available: https://www.ericsson.com/en/news/2016/8/ericsson-and-china-mobile-conduct-worlds-first-5g-drone-prototype-field-trial-.


\bibitem{Mozaffari2015}
M. Mozaffari, W. Saad, M. Bennis, and M. Debbah, ``Drone small cells in the clouds: design, deployment and performance analysis,'' in \textit{Proc. IEEE GLOBECOM}, pp. 1-6, Dec. 2015.


\bibitem{Mozaffari2016}
M. Mozaffari, W. Saad, M. Bennis, and M. Debbah, ``Unmanned aerial vehicle with underlaid device-to-device communications: performance and tradeoffs,'' \textit{IEEE Trans. Wireless Commun.}, vol. 15, no. 6, pp. 3949-3963, Jun. 2016.



\bibitem{Bor2016}
R. I. Bor-Yaliniz, A. El-Keyi, and H. Yanikomeroglu, ``Efficient 3-D placement of an aerial base station in next generation cellular networks,'' in \textit{Proc. IEEE ICC}, pp. 1-5, 2016.


\bibitem{Lyu2017}
J. Lyu, Y. Zeng, R. Zhang, and T. J. Lim, ``Placement optimization of UAV-mounted mobile base stations,'' \textit{IEEE Commun. Lett.}, vol. 21, no. 3, pp. 604-607, Mar. 2017.

\bibitem{WuGC20170}
Q. Wu, L. Liu, and R. Zhang, ``Fundamental tradeoffs in communication and trajectory design for UAV-enabled wireless network,'' \textit{IEEE Wireless Commun.}, to appear, 2018. [Online]. Available: https://arxiv.org/pdf/1805.07038.


\bibitem{WuGC2017}
Q. Wu, Y. Zeng, and R. Zhang, ``Joint trajectory and communication design for multi-UAV enabled wireless networks,'' \textit{IEEE Trans. Wireless Commun.}, vol. 17, no. 3, pp. 2109-2121, Mar. 2018.

\bibitem{WuGC20172}
Q. Wu and R. Zhang, ``Common throughput maximization in UAV-enabled OFDMA systems with delay consideration,''  \textit{IEEE Trans. Commun.}, vol. 66, no. 12, pp. 6614-6627, Dec. 2018.

\bibitem{WuGC20173}
Q. Wu, J. Xu, and R. Zhang, ``Capacity characterization of UAV-enabled two-user broadcast channel,''  \textit{IEEE J. Sel. Areas Commun.}, vol. 36, no. 9, pp. 1955-1971, Sept. 2018.

\bibitem{Sharma2016}
V. Sharma, M. Bennis, and R. Kumar, ``UAV-assisted heterogeneous networks for capacity enhancement,'' \textit{IEEE Commun. Lett.}, vol. 20, no. 6, pp. 1207-1210, Jun. 2016.


\bibitem{Zeng2016Trans}
Y. Zeng, R. Zhang, and T. J. Lim, ``Throughput maximization for UAV-enabled mobile relaying systems,'' \textit{IEEE Trans. Commun.}, vol. 64, no. 12, pp. 4983-4996, Dec. 2016.


\bibitem{Johansen2014}
T. A. Johansen, A. Zolich, T. Hansen, and A. J. S\o{}rensen, ``Unmanned aerial vehicle as communication relay for autonomous underwater vehicle - field tests,'' in \textit{Proc. IEEE GLOBECOM Workshops}, pp. 1469-1474, Dec. 2014.


\bibitem{Sotheara2014}
S. Sotheara, K. Aso, N. Aomi, and S. Shimamoto, ``Effective data gathering and energy efficient communication protocol in wireless sensor networks employing UAV,'' in \textit{Proc. IEEE WCNC}, pp. 2342-2347, 2014.


\bibitem{Lyu2016}
J. Lyu, Y. Zeng, and R. Zhang, ``Cyclical multiple access in UAV-aided communications: a throughput-delay tradeoff,'' \textit{IEEE Wireless Commun. Lett.}, vol. 5, no. 6, pp. 600-603, Dec. 2016.


\bibitem{Mozaffari2017}
M. Mozaffari, W. Saad, M. Bennis, and M. Debbah, ``Mobile unmanned aerial vehicles (UAVs) for energy-efficient internet of things communications,'' \textit{IEEE Trans. Wireless Commun.}, vol. 16, no. 11, pp. 7574-7589, Nov. 2017.



\bibitem{Qualcomm2016b}
``Cellular enables safer drone deployments,'' [Online]. Available: https://www.qualcomm.com/invention/technologies/lte/advanced-pro/cellular-drone-communication.


\bibitem{Motlagh2017}
N. H. Motlagh, M. Bagaa, and T. Taleb, ``UAV-based IoT platform: a crowd surveillance use case,'' \textit{IEEE Commun. Mag.}, vol. 55, no. 2, pp. 128-134, Feb. 2017.


\bibitem{Gopala2008}
P. K. Gopala, L. Lai, and H. E. Gamal, ``On the secrecy capacity of fading channels,'' \textit{IEEE Trans. Inf. Theory}, vol. 54, no. 10, pp. 4687-4698, Oct. 2008.


\bibitem{Liang2008}
Y. Liang, H. V. Poor, and S. Shamai, ``Secure communication over fading channels,'' \textit{IEEE Trans. Inf. Theory}, vol. 54, no. 6, pp. 2470-2492, Jun. 2008.


\bibitem{Wang2011}
X. Wang, M. Tao, J. Mo, and Y. Xu, ``Power and subcarrier allocation for physical-layer security in OFDMA-based broadband wireless networks,'' \textit{IEEE Trans. Inf. Forensics Security}, vol. 6, no. 3, pp. 693-702, Sep. 2011.

\bibitem{Xing2016}
H. Xing, L. Liu, and R. Zhang, ``Secrecy wireless information and power transfer in fading wiretap channel,'' \textit{IEEE Trans. Veh. Technol.}, vol. 65, no. 1, pp. 180-190, Jan. 2016.

\bibitem{Khisti2010}
A. Khisti and G. W. Wornell, ``Secure transmission with multiple antennas--part II: the MIMOME wiretap channel,'' \textit{IEEE Trans. Inf. Theory}, vol. 56, no. 11, pp. 5515-5532, Nov. 2010.


\bibitem{Tang2016}
J. Tang, H. Wen, L, Hu, et al., ``Associating MIMO beamforming with security codes to achieve unconditional communication security,'' \textit{IET Commun.}, vol. 10, no. 12, pp. 1522-1531, Aug. 2016.


\bibitem{Zheng2011}
G. Zheng, L. C. Choo, and K. K. Wong, ``Optimal cooperative jamming to enhance physical layer security using relays,'' \textit{IEEE Trans. Signal Process.}, vol. 59, no. 3, pp. 1317-1322, Mar. 2011.

\bibitem{Zou2013}
Y. Zou, X. Wang, and W. Shen, ``Physical-layer security with multiuser scheduling in cognitive radio networks,'' \textit{IEEE Trans. Commun.}, vol. 61, no. 12, pp. 5103-5113, Dec. 2013.


\bibitem{Tang2018}
J. Tang, M. Dabaghchian, K. Zeng, and H. Wen, ``Impact of mobility on physical layer security over wireless fading channels,'' \textit{IEEE Trans. Wireless Commun.}, vol. 17, no. 12, pp. 7849-7864, Dec. 2018.



\bibitem{Zhang2016}
G. Zhang, X. Li, M. Cui, et al., ``Signal and artificial noise beamforming for secure simultaneous wireless information and power transfer multiple-input multiple-output relaying systems,'' \textit{IET Commun.}, vol. 10, no. 7, pp. 796-804, 2016.

\bibitem{ZhangTVT2018}
G. Zhang, J. Xu, Q. Wu, et al., ``Wireless powered cooperative jamming for secure OFDM system,'' \textit{IEEE Trans. Veh. Technol.}, vol. 67, no. 2, pp. 1331-1346, Feb. 2018.


\bibitem{Anli2008}
A. Li, Q. Wu, and R. Zhang, ``UAV-enabled cooperative jamming for improving secrecy of ground wiretap channel," \textit{IEEE Wireless Commun. Lett.}, to appear, 2018.

\bibitem{Cui2018}
M. Cui, G. Zhang, Q. Wu, and D. W. K. Ng, ``Robust trajectory and transmit power design for secure UAV communications,'' \textit{IEEE Trans. Veh. Technol.}, vol. 67, no. 9, pp. 9042-9046, Sept. 2018.


\bibitem{UAVDisaster}
M. Erdelj, E. Natalizio, K. R. Chowdhury, and I. F. Akyildiz, ``Help from the sky: leveraging UAVs for disaster management,''  \textit{IEEE Pervasive Comput.}, vol. 16, no. 1, pp. 24-32, Jan. 2017.

\bibitem{UAVSAR}
M. Caris, et al., ``mm-Wave SAR demonstrator as a test bed for advanced solutions in microwave imaging,'' \textit{IEEE Aerosp. Electron. Syst. Mag.}, vol. 29, no. 7, pp. 8-15, Jul. 2014.


\bibitem{He2017}
D. He, S. Chan, and M. Guizani, ``Communication security of unmanned aerial vehicles,'' \textit{IEEE Wireless Commun.}, vol. 24, no. 4, pp. 134-139, Aug. 2017.


\bibitem{He2018}
D. He, Y. Qiao, S. Chan, and N. Guizani, ``Flight security and safety of drones in airborne fog computing systems,'' \textit{IEEE Commun. Mag.}, vol. 56, no. 5, pp. 66-71, May 2018.

\bibitem{Singandhupe2018}
A. Singandhupe, H. M. La, and D. Feil-Seifer, ``Reliable security algorithm for drones using individual characteristics from an EEG signal,'' \textit{IEEE Access}, vol. 6, pp. 22976-22986, 2018.


\bibitem{Jiang2012}
F. Jiang and A. L. Swindlehurst, ``Optimization of UAV heading for the ground-to-air uplink,'' \textit{IEEE J. Sel. Areas Commun.}, vol. 30, no. 5, pp. 993-1005, Jun. 2012.


\bibitem{Zeng2017}
Y. Zeng and R. Zhang, ``Energy-efficient UAV communication with trajectory optimization,'' \textit{IEEE Trans. Wireless Commun.}, vol. 16, no. 6, pp. 3747-3760, Jun. 2017.


\bibitem{Lin2017}
X. Lin, et al., ``The sky is not the limit: LTE for unmanned aerial vehicles,'' \textit{IEEE Commun. Mag.}, vol. 56, no. 4, pp. 204-210, Apr. 2018.




\bibitem{Boyd2004}
S. Boyd and L. Vandenberghe, \textit{Convex Optimization}. Cambridge University Press, 2004.


\end{thebibliography}
\end{document}